\newtheorem{theorem}{Theorem}[section]
\newtheorem{lemma}[theorem]{Lemma}
\newtheorem{corollary}[theorem]{Corollary}
\theoremstyle{definition}
\newtheorem{definition}[theorem]{Definition}
\theoremstyle{remark}
\newtheorem{remark}[theorem]{Remark}
\numberwithin{equation}{section}
\begin{document}

\title[A particle model for the herding phenomena]{A particle model for the herding phenomena induced by dynamic market signals}

\author{Hyeong-Ohk Bae}
\address{Department of Financial Engineering, Ajou University, Suwon, Republic of Korea}
\email{hobae@ajou.ac.kr}
\author{Seung-yeon Cho}
\address{Department of Mathematics, Sungkyunkwan University, Suwon 440-746, Republic of Korea}
\email{chosy89@skku.edu}
\author{Sang-hyeok Lee}
\address{Department of Financial Engineering, Ajou University, Suwon, Republic of Korea}
\email{sanghyeoke@ajou.ac.kr}
\author{SEOK-BAE YUN }
\address{Department of Mathematics, Sungkyunkwan University, Suwon 440-746, Republic of Korea}
\email{sbyun01@skku.edu}




\thanks{H.-O. Bae was supported by the Basic Science Research Program through the National Research Foundation of Korea (NRF) funded by the Ministry of Education, Science and Technology (2015R1D1A1A01057976). S.-B. Yun
was supported by the Basic Science Research Program through the National Research Foundation of Korea (NRF) funded by the Ministry of Education, Science and Technology by
(NRF-2016R1D1A1B03935955).} 
\begin{abstract}
In this paper, we study  the herding phenomena in financial markets arising from the combined effect of (1) non-coordinated collective interactions between the market players and
(2) concurrent reactions of market players to dynamic market signals.
By interpreting the expected rate of return of an asset and the favorability on that asset  as position and velocity in phase space, we construct  an agent-based particle model for herding behavior in finance.
We then define two types of herding functionals using this model, and show that they satisfy a Gronwall type estimate and a LaSalle type invariance property respectively, leading to the herding behavior of the market players. Various numerical tests are presented to numerically verify these results.
\end{abstract}
\maketitle

\section{introduction}

Collective behaviors  such as aggregation, fads, fashion, flocking and herding are frequently observed in nature \cite{BTTYB09, BBNS10,CCGPSST10,CS,TT} and society \cite{AZ98,BHW92,DW,HemSuk09}.  Among these various types of collective behaviors,  the flocking phenomena, in which alignment of the velocity occurs through the process of adjusting the velocity according to the particles around it, has seen tremendous progress recently.
 Several models have been suggested such as the Cucker-Smale model \cite{CS,CS2}
 or Viseck model \cite{VCBCS}, and many successful mathematical theories
  have been developed to understand those models \cite{CFRT,CHL,DM,HLL,HL,HT,MT}.


In this paper, we study the herding behavior arising in financial markets using a particle model. The term herding is used in several different contexts. The most basic meaning underlying them is the gathering behavior of individuals to form a group and move as a group. Therefore, unlike the flocking phenomena,
the adjustment occurs also between position variables and, therefore,  the
interactions between position and velocity may play important roles in the dynamics.
In finance and economy, herding is often used to describe the phenomena in which the market players tend more
and more to follow the market trend even though one's own opinion, information, favorability or instinct are against it \cite{AZ98}. The expression ``information cascade" is also frequently used.

In traditional economics and finance, it is assumed that all agents are rational, and
all the information is already reflected on the price (efficient markets hypothesis), which implies
the absence of bubbles \cite{ Abreu_B03, Milgrom_S82,Santos_W97,  Tirole82}.
However, as we can see in the examples of Tulip mania in 1637, the South Sea Bubble in 1711-1720,
the stock market boom in 2000, and the financial crisis in US housing market in 2007,
there have been many irrational events like bubbles and crashes.
It is not clear whether these phenomena are caused solely by herding behavior of the market players, but herding definitely plays a crucial role in the formulation of such phenomena.


Previous works on herding behavior in finance were largely based on sequential analysis
 \cite{Abreu_B03, AZ98, Banerjee92, BHW92, Brun01, HwangSalmon04, Lee95}, in which  the effect of the decision of the first player on the behavior of subsequent players was analyzed. (See Section 2.)
  In this paper, we study the herding phenomena arising from concurrent reaction to other players and dynamic market signals. By the word ``dynamic", we mean either (1) the signal changes with time, or (2) the signal is determined by the dynamics of market players.

%


For this, we introduce two variables: $x_i(t)$: the rate of return of assets over time $t$ expected by $i$th market player, and $v_i(t)$: the favorability that  $i$th market player has on those assets at time $t$. These two variables play the role of position and velocity of a self-propelled particle in the phase space, and enable one to derive a particle model giving dynamical relations between the rate of return, favorability and the market signal. (See Section {\ref{sec2}.)
We then analyze the herding properties of the system by deriving two Lyapunov type herding functionals satisfying a Gronwall type inequality and LaSalle type
invariance conditions respectively. (See Section 3, 4, 5.)

In flocking models,  the occupation of the same place by several particles is considered to be undesirable \cite{ACHL,CCMP,CS,CS2}.  In contrast,  we allow   ``particles" to occupy the same $x$ and $v$. Such overlap corresponds to the emergence of consensus on the expected  rate of return and the favorability on specific assets, which is exactly what we try to model.


The outline of this paper is as follows: In Section 2, we derive a particle model describing herding phenomena induced by dynamic market signals. A motivation of our model from the perspective of a pricing model in finance is also given.
In Section 3, our main herding theorems are presented. Section 4 and Section 5 are then devoted to the proof of the main results. In  Section 6, we provide some relevant numerical simulations.
The conclusion and possible future projects are discussed in Section 7.

\section{Particle model for herding behavior in finance}\label{sec2}
Suppose there are $N$ market players and $M$ assets, such as stocks or real estimates.
For simplicity we assume that no new players or assets enter or leave the market. We then define $x_i(t)=\big(x^1_i(t),\cdots, x^M_i(t) \big)\in \mathbb{R}^M$ and
$v_i(t)=\big(v_i^1(t),\cdots,v_i^M(t)\big)\in \mathbb{R}^M$,  $(i=1,\cdots,N)$  in the following way:
\begin{itemize}
\item $x_i(t)$:  the rate of return of the assets  $1,\cdots,M$  expected by  market player $i$ over time $t$.
\item $v_i(t)$:  the favorability that  market player $i$ has on those assets  $1,\cdots, M$ at time $t$.
\end{itemize}
We denote $x(t)=\big(x_1(t),x_2(t),\cdots,x_N(t)\big)\in\mathbb{R}^{MN}$ and $v(t)=\big(v_1(t),v_2(t),\cdots,v_N(t)\big)\in\mathbb{R}^{MN}$.\newline

It is natural to assume that the market players may consider the value of the asset in a more favorable way if the expectation on the rate of return rises,
 and the opposite when the rate of return decreases.
In this regard, we relate $x_i$ and $v_i$ by
\[
\frac{dx_i}{dt}=v_i. \quad (i=1,\cdots,N)
\]

To describe the dynamics of $v_i$, we assume that the market players are very sensitive to market trend and imitative strategy prevails in the market, which
is believed to be true by both economists and market participants up to certain level. We formulate this assumption by postulating that the favorability is affected by other player's assessment in the following three ways:
\begin{enumerate}
\item  Other players' evaluation on the expected rate of return for the assets:
\begin{align*}
\frac{1}{N}\sum_{j=1}^N\phi_{ij}(x_j-x_i).
\end{align*}
\item Other players' favorability on the assets:
\begin{align*}
 \frac{1}{N}\sum_{j=1}^N\phi_{ij}(v_j-v_i).
\end{align*}
In (1) and (2), $\phi_{ij}$ is the communication rate between player $i$ and player $j$, whose precise form will be given below.\newline
\item Various types of signals from the market also influence players' opinion and decision. 
    Such an influence can be observed more clearly when the market is experiencing a rapid transition or turmoil such as the 1997 Asian financial crisis and the subprime mortgage crisis in 2008, to name a few. To systematically formulate such signals, we introduce a function $w(x,t)$, which we call the ``dynamic market signal",  and assume that the favorability on the asset is affected by how big  the difference is between the expected return and the signal:
\begin{align*}
w(x,t)-x_i(t).
\end{align*}
Explicit examples of $w$ will be considered below.
\end{enumerate}

By combining the above three effects, we derive our main model:
\begin{align}\label{herding model}
\begin{split}
\frac{dx_i}{dt}&=v_i,\cr
\frac{dv_i}{dt}&
=\frac{\lambda_x}{N}\sum_{j=1}^N\phi_{ij}\big(x_j(t)-x_i(t)\big)
+\frac{\lambda_v}{N}\sum_{j=1}^N\phi_{ij}\big(v_j(t)-v_i(t)\big)
+\lambda_w \big(w(x,t)-x_i(t)\big),
\end{split}
\end{align}
where $\lambda_x$, $\lambda_v$ and $\lambda_w$ are interaction strength.

Several choices can be made for the communication rate $\phi_{ij}$, which determine how strongly a player's expected rate of return and  favorability
are influenced by those of other players in the market. Throughout this paper, we use
\[
\phi_{ij}:=\phi(x_i,x_j)=\frac{1}{\big(1+|x_i-x_j|^2\big)^\frac{\gamma}{2}}
\]
for $\gamma>0.$
We remark that the effect of noise should be considered for this model to be more realistic:
\begin{align*}
\begin{split}
\frac{dx_i}{dt}&=v_i,\cr
dv_i&=\frac{\lambda_x}{N}\sum_{j=1}^N\phi_{ij}\big(x_j(t)-x_i(t)\big)dt
+\frac{\lambda_v}{N}\sum_{j=1}^N\phi_{ij}\big(v_j(t)-v_i(t)\big)dt\cr
&+\lambda_w \Big(w(x,t)-x_i(t)\Big)dt+\sigma_i dW_t^i,
\end{split}
\end{align*}
where $\sigma_i$ is the volatility and $W_t^i$ is $M$-dimensional Brownian motion. Throughout this paper, however, we neglect the effect of noise and consider only $(\ref{herding model})$ for simplicity and clarity. We leave it as a future project.
%
%
%
%

\textbf{Examples of the dynamic market signal $w(x,t)$:}
We can choose various types of  mathematical expressions for the dynamic market signal depending on the market situation.
Some of the financially interesting examples are:
\begin{enumerate}[label=(\alph*)]
\item A sweeping trend of the market that the market players cannot handle. (Ex: abrupt upheaval in the market such as economic crisis or the rise and fall of foreign currency exchange rate) In this case, we set
    \[w=w(t)\]
    to be a given function of time.
\item  Asymmetric information or signal from an informed  influential market player like Warren Buffett.
Without loss of generality, we fix $x_1$ to be such influential  player so that we can set \[w(x,t)=x_1.
\]
That is, the rate of return expected by an influential player is a strong affecting factor in the market
\item  Average market expectation, market atmosphere or some average index  such as Dow Johns index, which is believed to  reflect  such  average market expectation, in which case we can define \[w(x,t)=\frac{1}{N}\sum^N_{i}x_i.\]
\end{enumerate}
We will show that the herding behavior induced by these signals can be explained in a unified manner. (See Section 3.)

\textbf{Financial motivation of the model}:
We now provide a financial motivation of our model (\ref{herding model}).
For that, we recall the geometric Brownian motion for an asset price $S(t)$:
\begin{align*}
\frac{dS}{S}=\mu dt + \sigma d W_t,
\end{align*}
where $\mu$ is the instantaneous expected rate of return,
$\sigma$ the volatility and $W_t$ the one-dimensional Brownian motion.
We then apply Ito's formula:
\begin{align*}
S(t)=S(0) e^{\int_{0}^{t}\mu(s)-\frac{1}{2}\sigma(s)^2 ds + \int_{0}^{t}\sigma(s) dW_s},
\end{align*}
and take expectation $\mathbb{E}[\cdot]$ to ${S(t)}$ to get
\begin{align*}
\mathbb{E}[S(t)]=S(0) e^{\int_{0}^{t}\mu(s)ds},
\end{align*}
or equivalently,
\begin{align}\label{logE and mu}
	\frac{d}{dt} \log{\mathbb{E}[S(t)]}  = \mu(t).
\end{align}

According to \cite{Merton}, the expected rate of return $\mu$ is generated by the following stochastic differential equation
\begin{align}\label{sde2}
	d\mu = a\big(\theta - \mu\big) dt + \delta\sigma dW_t,
\end{align}
where the first term represents a long-run regressive adjustment of the expected rate of return
toward a normal rate of return $\theta$ with the adjustment speed $a$, and
the second term is a short-run extrapolative adjustment of the expected rate of return of
the error-learning type with the adjustment speed $\delta$.
%
%
Coupling \eqref{sde2} with (\ref{logE and mu}),  we get the following system:
\begin{align*}
\frac{d}{dt} \log{\mathbb{E}[S]} = \mu,\qquad
d\mu = a(\theta - \mu)dt+ \delta\sigma dW_t.
\end{align*}
This can be generically extended to multi-variable system: $(i=1,\cdots, N)$
\begin{align*}
\frac{d}{dt} \log{\mathbb{E}[S_i]} = \mu_i,\qquad
d\mu_i = a(\theta_i - \mu_i)dt+ \delta\sigma_i dW_t^i.
\end{align*}
Rewriting $(\log{\mathbb{E}[S_i]}, \mu_i)$ by $(x_i, v_i)$ and taking $\delta=0$,
 we may express this as a particle model:
\begin{align*}
	x_i' = v_i,\qquad
	v_i' = a\big(\theta_i - v_i\big).
\end{align*}
Now, if we make the following choice:
\[
a=\lambda_v, ~\theta_i(t)=\frac{1}{N}\sum_{j=1}^N v_j,
\]
we recover our herding model (\ref{herding model}) with $\lambda_x=\lambda_w=0$ and $\phi_{ij}=1$.

\textbf{Brief review of the studies in Finance on Herding: } Herding behavior in finance has been extensively studied in the literature
including \cite{Abreu_B03, AZ98, Banerjee92, BHW92,
	Brun01,  Lee95}.
Herding behavior is associated with people blindly following the decisions of others \cite{Brun01}.
Imitating somebody's action can be rational if the predecessor's action affects one's
(1) payoff structure such that
imitation leads to a higher payoff (payoff externality) and/or
(2) his probability assessment of the state of the world such that it dominates the private signal
(informational externality).
Herding due to informational externalities occurs if an agent imitates the decision of his
predecessor even though his own signal might advise him to take a different action.
This herding can also lead to informational cascades.
In 
\cite{AZ98}, the concepts of investor herding and informational cascade are defined:
\begin{itemize}
\item An {\sl informational cascade} occurs  in a period $t$ when
$$P(h_t| V,H_t)=P(h_t| H_t)\qquad \forall V, h_t. $$
\item A trader with private information $x_\theta$ engages in {\sl herd behavior} at time $t$
 if he buys when $V^0_\theta (x_\theta)<V^0_m<V^t_m$
or if he sells when $V^0_\theta(x_\theta)>V^0_m>V^t_m$;
and buying (or selling) is strictly preferred to other actions.
\end{itemize}
Here,  $P(\cdot|\cdot)$ means the conditional probability,
$V$ denotes the value of the new information,
$H_t$ is the history of actions until time $t$,
and $h_t$ is the action (buy or sell) taken by the trader (market player) who arrives in period $t$,
$x_\theta$ is  the private information of trader $\theta$,
$V^t_m :=\mathbb {E}[V|H_t]$ is the market maker's expected value for the asset given public information,
which we sometimes refer to as the price,
and $V^t_\theta(x):=\mathbb{E}[V|H_t, x_\theta=x]$ is the expected value of an informed trader $\theta$.

In an informational cascade, new information on the asset does not affect the decision of the market players.
The above technical definition in \cite{AZ98} of the buying herding behavior can be expressed in the following three steps:
(1) Initially  a trader's evaluation is less than the market value of the asset, so that he is
inclined to sell.
(2) The market value of the asset is, nevertheless, increasing.
(3) The trader must want to buy the asset ignoring
his own evalution.
Also in \cite{AZ98}, it is shown that whether or not herd behavior affects asset prices,
asset prices can certainly affect herd behavior.

Even though it can make a large difference whether the market players decide sequentially or simultaneously,
most herding models are studied sequentially.
Up to the best knowledge of authors, particle model interpreting it as a dynamical system that reacts concurrently to other players and the market signal has not been proposed. A related study can be found in
\cite{ABHKL13,BHKLLY15}, where the flocking behavior of volatility is considered using Cucker-Smale type models.

As related works, we mention \cite{DL, DJT}, where Boltzmann type  kinetic equations were suggested to model the dynamics of a market and understand the formation of bubbles and crashes through the  combined effect of public information and  herding (See also \cite{Toscani}.), and \cite{BMP} in which a macroscopic herding model of Keller-Siegel type  was introduced to simulate  herding behaviours of human crowds.




%
%
%
%
%
%
\section{Main results}
In this section, we present our main results. We start with some simplification of our model for the convenience of  proof.
\subsection{Centralized herding model:} We first record a simple result on the averaged motion $x_i$
and $v_i$. Let $x_c$, $v_c$ denote the average expected rate of return and the average favorability
respectively:
\[
x_c(t)=\frac{1}{N} \sum_{i=1}^N x_i(t), \quad v_c(t)=\frac{1}{N} \sum_{i=1}^N v_i(t).
\]
These averaged quantities evolve according to the following simple system:
\begin{lemma}\label{average} $x_c$ and $v_c$ satisfy
\begin{align*}\label{mean}
\begin{split}
\frac{d}{dt}x_c(t)&=v_c(t),\\
\frac{d}{dt}v_c(t)&=\lambda_w\left( w(x,t)-x_c(t)\right).
\end{split}
\end{align*}
\end{lemma}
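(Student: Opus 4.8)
The plan is to differentiate the definitions of $x_c$ and $v_c$ termwise and substitute the dynamics \eqref{herding model}, after which the collective interaction terms collapse by an elementary symmetry argument. Since everything in sight is a finite sum, differentiation commutes with the summation and there are no convergence or regularity subtleties to worry about.

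First I would dispatch the position average. Using only the first equation of \eqref{herding model} and the definition of $v_c$,
\begin{align*}
\frac{d}{dt}x_c(t)=\frac{1}{N}\sum_{i=1}^N\frac{dx_i}{dt}=\frac{1}{N}\sum_{i=1}^N v_i(t)=v_c(t),
\end{align*}
which gives the first identity with no further work.

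Next I would treat the favorability average, where the only genuine content lies. Differentiating $v_c$ and inserting the second equation of \eqref{herding model} produces three groups of terms: a double sum weighted by $\lambda_x$, a double sum weighted by $\lambda_v$, and the signal term weighted by $\lambda_w$. The key observation is that the communication rate is symmetric: because $\phi_{ij}=\phi(x_i,x_j)=\big(1+|x_i-x_j|^2\big)^{-\gamma/2}$ depends on $x_i,x_j$ only through $|x_i-x_j|$, we have $\phi_{ij}=\phi_{ji}$. I would exploit this to show each interaction double sum vanishes, e.g.
\begin{align*}
\sum_{i=1}^N\sum_{j=1}^N\phi_{ij}\big(x_j-x_i\big)=\sum_{i=1}^N\sum_{j=1}^N\phi_{ij}x_j-\sum_{i=1}^N\sum_{j=1}^N\phi_{ij}x_i=0,
\end{align*}
where the last equality follows by relabeling $i\leftrightarrow j$ in the first double sum and using $\phi_{ij}=\phi_{ji}$. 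The identical argument with $v_j-v_i$ in place of $x_j-x_i$ annihilates the $\lambda_v$ term. Only the signal contribution survives, and since $w(x,t)$ carries no summation index $i$ it factors out of the average, giving $\frac{1}{N}\sum_{i=1}^N\lambda_w\big(w(x,t)-x_i\big)=\lambda_w\big(w(x,t)-x_c\big)$, which is the second identity.

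In this computation there is no real obstacle; the whole statement reduces to the antisymmetry of $\phi_{ij}\big(x_j-x_i\big)$ and of $\phi_{ij}\big(v_j-v_i\big)$ under the exchange $i\leftrightarrow j$. The only points requiring a word of care are the justification that $\phi_{ij}=\phi_{ji}$, which is immediate from the explicit form of $\phi$, and the remark that $w(x,t)$ is independent of $i$. Consequently the averaged dynamics decouples completely from the collective interaction terms, and $(x_c,v_c)$ evolves under the market signal alone.
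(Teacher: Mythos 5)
Your proposal is correct and follows essentially the same route as the paper: sum the dynamics over $i$, use the symmetry $\phi_{ij}=\phi_{ji}$ to annihilate both interaction double sums, and note that $w(x,t)$ is independent of $i$ so the signal term averages to itself. The only difference is that you spell out the relabeling $i\leftrightarrow j$ explicitly, which the paper leaves as a "symmetry argument."
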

\begin{proof}
Since $\phi_{ij} = \phi_{ji}$ for all $1 \leq i,j \leq N$, we have from symmetry argument
\begin{equation}\label{symm}
\sum_{i=1}^N\sum_{j=1}^N\phi_{ij}(x_j-x_i)=\sum_{i=1}^N\sum_{j=1}^N\phi_{ij}(v_j-v_i)=0.
\end{equation}
Using these identity with $\frac{1}{N}\sum_{i=1}^{N} w(x,t)=w(x,t)$, we get the desired result by summing  (\ref{herding model}) over $1\leq i\leq N$.
\end{proof}
We will show that the large time behavior of $(x,v)$ is governed by $(x_c, v_c)$. In view of this, we replace
$$x_i(t)-x_c(t)\rightarrow \hat{x}_i(t), \quad v_i(t)-v_c(t) \rightarrow \hat{v}_i(t),$$
in our model (\ref{herding model}) to get
\begin{align}\label{centralized herding model}
\begin{split}
\frac{d\hat{x}_i}{dt}&=\hat{v}_i,\cr
\frac{d\hat{v}_i}{dt}&
=\frac{\lambda_x}{N}\sum_{j=1}^N\hat{\phi}_{ij}\big(\hat{x}_j(t)-\hat{x}_i(t)\big)
+\frac{\lambda_v}{N}\sum_{j=1}^N\hat{\phi}_{ij}\big(\hat{v}_j(t)-\hat{v}_i(t)\big)
-\lambda_w \hat{x}_i(t),
\end{split}
\end{align}
where $\hat{\phi}_{ij}:= \phi(\hat{x}_i,\hat{x}_j)$.
Note that
\begin{align}
\hat{x}_c(t)=\hat{v}_c(t)=0 \mbox{ for all } t \geq 0.
\end{align}
 From now on, we study only this centralized version for clarity and simplicity.

\subsection{Two herding functionals:} We define two kinds of herding energies $\mathcal{E}_1$ and $\mathcal{E}_2$ and prove their decay property. $\mathcal{E}_1$ is defined on a rather stringent assumptions on the parameters and initial configuration, but an explicit exponential
herding rate can be derived (Theorem \ref{main theorem 1}). For $\mathcal{E}_2$, such explicit herding rate is not available, but restrictions on the parameters and initial configuration can be relaxed a lot (Theorem \ref{main theorem 2}). To state our main theorem, we first need to define
some notations to be kept throughout this paper.
\begin{itemize}
\item $L^2$ deviation and covariance functionals:
\[
X(t)=\sum_{i=1}^{N}|\hat{x}_i(t)|^2,~ V(t)=\sum_{i=1}^{N}|\hat{v}_i(t)|^2
\]
and
\[
C(X,V)(t)=\sum_{i=1}^{N}\hat{x}_i(t)\cdot \hat{v}_i(t).
\]
\item Weighted $L^2$ deviation and weighted covariance functionals:
\[
X_{\phi}(t)=\frac{1}{N}\sum_{i=1}^{N}\sum_{j=1}^{N}\hat{\phi}_{ij}|\hat{x}_i(t)-\hat{x}_j(t)|^2,~ V_{\phi}(t)=\frac{1}{N}\sum_{i=1}^{N}\sum_{j=1}^{N}\hat{\phi}_{ij}|\hat{v}_i(t)-\hat{v}_j(t)|^2
\]
and
\[
C_{\phi}(X,V)(t)=\frac{1}{N}\sum_{i=1}^{N}\sum_{j=1}^{N}\hat{\phi}_{ij}(\hat{x}_i(t)-\hat{x}_j(t))(\hat{v}_i(t)-\hat{v}_j(t)).
\]
\end{itemize}
We define the herding behavior of market:
\begin{definition} Let $(\hat{x}(t),\hat{v}(t))$ be the solution to (\ref{centralized herding model}). Then, we say that the herding phenomena occurs if
\begin{align*}
	\lim\limits_{t\rightarrow \infty} |\hat{x}_i(t)-\hat{x}_j(t)|=0 \quad \text{and} \quad \lim\limits_{t\rightarrow \infty} |\hat{v}_i(t)-\hat{v}_j(t)|=0 \quad  \text{for all }~ i\ne j.
\end{align*}
\end{definition}
We now state our main theorem.
\subsection{Main result I -  Exponential herding:} Define the herding energy of the market by
\begin{align*}
\mathcal{E}_1(t)=\lambda_w X(t) + \frac{2\lambda^2_x}{(\lambda_x+\lambda_w)\lambda_v}C(X,V)(t) +V(t).
\end{align*}

\begin{theorem}\label{main theorem 1}
Let $\gamma>0$. Suppose the interaction strength $\lambda_x$, $\lambda_v$ and $\lambda_w$ satisfy
\begin{align}\label{parameter assumption}
\frac{1}{2}\left(\frac{\lambda_v}{\lambda_x}\right)^2\lambda_w>1.
\end{align}
Assume that the initial configuration  satisfies
\begin{align} \label{assumption}
\max_{i,j}|\hat{x}_i(0)-\hat{x}_j(0)|<\frac{1}{2}\sqrt{\left(\frac{1}{2}\left(\frac{\lambda_v}{\lambda_x}\right)^2\lambda_w\right)^{2/\gamma}-1}
\end{align}
and
\begin{align}\label{assumption2}
\mathcal{E}_1(0)<\frac{3}{2}\Big(1-\frac{M_0}{2}\Big)\lambda_w\max_{i,j}|\hat{x}_i(0)-\hat{x}_j(0)|^2,
\end{align}
where $M_0$	denotes
\begin{align}\label{M0}
M_0=\frac{1}{(1+4\max_{i,j}|\hat{x}_i(0)-\hat{x}_j(0)|^2)^{\frac{\gamma}{2}}}.
\end{align}
Then the herding functional $\mathcal{E}_1$ decays exponentially fast:
\[
\mathcal{E}_1(t)\leq e^{-\kappa t}\mathcal{E}_1(0),
\]
where the decay rate $\kappa$ is explicitly given by
\[
\kappa = \delta \min\bigg\{\frac{1}{2},-1+\frac{1}{2}\left(\frac{\lambda_v}{\lambda_x}\right)^2 \lambda_w M_0\bigg\}\frac{2\lambda^2_x}{(\lambda_x+\lambda_w)\lambda_v}
\]
and $0<\delta<1$ is a constant to be chosen in the proof. Moreover, we have
\begin{align}\label{2bae}
\max_{i,j}|\hat{x}_i(t)-\hat{x}_j(t)| < 2\max_{i,j}|\hat{x}_i(0)-\hat{x}_j(0)|
\end{align}
for all $t\geq 0$.
\end{theorem}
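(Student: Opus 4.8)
The plan is to treat $\mathcal{E}_1$ as a Lyapunov functional, establish a differential inequality $\frac{d}{dt}\mathcal{E}_1 \le -\kappa\mathcal{E}_1$ (whence exponential decay follows by Gronwall), and obtain the diameter bound (\ref{2bae}) as the output of a continuity argument run in parallel. First I would differentiate the three building blocks $X$, $V$, $C$ along (\ref{centralized herding model}). Using the symmetry $\hat{\phi}_{ij}=\hat{\phi}_{ji}$ exactly as in (\ref{symm}) together with the centralization $\hat{x}_c=\hat{v}_c=0$, one arrives at the closed system
\begin{align*}
\dot{X} = 2C,\qquad \dot{C} = V - \lambda_w X - \frac{\lambda_x}{2}X_\phi - \frac{\lambda_v}{2}C_\phi,\qquad \dot{V} = -2\lambda_w C - \lambda_x C_\phi - \lambda_v V_\phi.
\end{align*}
Writing $\alpha := \frac{2\lambda_x^2}{(\lambda_x+\lambda_w)\lambda_v}$, the combination $\dot{\mathcal{E}}_1 = \lambda_w\dot{X}+\alpha\dot{C}+\dot{V}$ enjoys a cancellation of its $\pm 2\lambda_w C$ contributions, leaving
\begin{align*}
\dot{\mathcal{E}}_1 = \alpha V - \alpha\lambda_w X - \frac{\alpha\lambda_x}{2}X_\phi - \Big(\frac{\alpha\lambda_v}{2}+\lambda_x\Big)C_\phi - \lambda_v V_\phi.
\end{align*}

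The crux is to control the sign of this expression, and the real difficulty is the indefinite weighted covariance $C_\phi$: its coefficient $\beta := \frac{\alpha\lambda_v}{2}+\lambda_x$ is too large to be absorbed into $X_\phi,V_\phi$ by weighted Cauchy--Schwarz alone, since a short computation gives $\beta^2 - 2\alpha\lambda_x\lambda_v = \frac{\lambda_x^2\lambda_w^2}{(\lambda_x+\lambda_w)^2}>0$. The resolution, and the nonobvious step, is to first feed the restoring term $-\alpha\lambda_w X$ into $X_\phi$: because $\hat{\phi}_{ij}\le 1$ forces $X_\phi\le 2X$, one may replace $-\alpha\lambda_w X \le -\frac{\alpha\lambda_w}{2}X_\phi$, and the prescribed value of $\alpha$ is engineered so that $\frac{\alpha(\lambda_x+\lambda_w)}{2}=\frac{\lambda_x^2}{\lambda_v}$. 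Working inside the bootstrap region where the diameter stays below $2D_0$ with $D_0 := \max_{i,j}|\hat{x}_i(0)-\hat{x}_j(0)|$, the weight obeys $\hat{\phi}_{ij}>M_0$, hence $V_\phi\ge 2M_0 V$ and $\alpha V \le \frac{\alpha}{2M_0}V_\phi$. After these substitutions and $|C_\phi|\le\sqrt{X_\phi V_\phi}$, the right-hand side becomes a quadratic form in $(\sqrt{X_\phi},\sqrt{V_\phi})$ with coefficients $-\frac{\lambda_x^2}{\lambda_v}$, $\beta$, $-(\lambda_v-\frac{\alpha}{2M_0})$, whose discriminant $\beta^2-4\frac{\lambda_x^2}{\lambda_v}(\lambda_v-\frac{\alpha}{2M_0})$ is strictly negative exactly under (\ref{parameter assumption}) and (\ref{assumption}), which are equivalent to $\frac{1}{2}(\frac{\lambda_v}{\lambda_x})^2\lambda_w M_0>1$. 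This yields $\dot{\mathcal{E}}_1\le -c(X_\phi+V_\phi)$ for some $c>0$; combining $X_\phi+V_\phi\ge 2M_0(X+V)$ with the fact that $\mathcal{E}_1$ is comparable to $X+V$ (its positive-definiteness being itself guaranteed by (\ref{parameter assumption})) gives $\dot{\mathcal{E}}_1\le -\kappa\mathcal{E}_1$. The free parameter $0<\delta<1$ and the $\min\{\cdots\}$ in $\kappa$ appear when one optimizes the split of this negative form between its $X$- and $V$-directions.

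Gronwall's inequality then delivers $\mathcal{E}_1(t)\le e^{-\kappa t}\mathcal{E}_1(0)$. It remains to justify the standing hypothesis $\hat{\phi}_{ij}>M_0$, i.e. the bound (\ref{2bae}), which I would close by continuity. Let $T^\ast$ be the supremum of times up to which $\max_{i,j}|\hat{x}_i-\hat{x}_j|<2D_0$ holds; on $[0,T^\ast)$ the estimate above applies, so $\mathcal{E}_1$ is nonincreasing, and since in the centralized frame $\max_{i,j}|\hat{x}_i-\hat{x}_j|^2\le 4X$ and $X$ is controlled by $\mathcal{E}_1$, one gets $\max_{i,j}|\hat{x}_i(t)-\hat{x}_j(t)|^2\lesssim \mathcal{E}_1(t)\le\mathcal{E}_1(0)$. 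The smallness hypothesis (\ref{assumption2}) is calibrated precisely so that this forces $\max_{i,j}|\hat{x}_i(t)-\hat{x}_j(t)|^2<4D_0^2$ strictly; hence the diameter can never reach $2D_0$, so $T^\ast=\infty$, (\ref{2bae}) holds for all $t\ge0$, and $\hat{\phi}_{ij}>M_0$ is validated globally, in turn legitimizing the decay estimate.

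I expect the decisive obstacle to be the second paragraph, namely taming the indefinite $C_\phi$ term: the naive Young/Cauchy--Schwarz absorption provably fails, and the entire argument hinges on first transferring the spring term $\lambda_w X$ into $X_\phi$, which is exactly what the weight $\alpha$ is designed to make succeed. The secondary technical burden is the bookkeeping in the continuity argument, where one must check that the constants appearing in (\ref{assumption}) and (\ref{assumption2}) are precisely those required both for the quadratic form to be negative-definite and for the diameter to remain strictly inside $2D_0$; this is where $M_0$ and $\delta$ must be tracked with care.
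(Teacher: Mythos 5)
Your overall strategy coincides with the paper's: the same Lyapunov functional, the same bootstrap region where the diameter stays below $2D_0$ (so that $\hat{\phi}_{ij}\ge M_0$), and, crucially, the same key idea of transferring the spring term $-\alpha\lambda_w X$ into $X_\phi$ via $X\ge\tfrac{1}{2}X_\phi$ before attacking the cross term. Your Step 1 differs only in execution: the paper transfers just \emph{half} of $-\alpha\lambda_w X$ (keeping $-\tfrac{1}{2}\alpha\lambda_w X$ for the Gronwall closure), completes the square exactly in the weighted variables, and is left with $-\tfrac{\alpha\lambda_v^2\lambda_w}{4\lambda_x^2}V_\phi$, which becomes $-V$ via $V_\phi\ge 2M_0V$; you transfer all of it, absorb $\alpha V\le\tfrac{\alpha}{2M_0}V_\phi$ first, and run a discriminant argument in $(\sqrt{X_\phi},\sqrt{V_\phi})$. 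Your discriminant condition does hold under the hypotheses: it reduces to $M_0>\tfrac{4\lambda_x^2(\lambda_x+\lambda_w)}{\lambda_v^2\lambda_w(4\lambda_x+3\lambda_w)}$, which is implied by $\tfrac{1}{2}(\lambda_v/\lambda_x)^2\lambda_w M_0>1$ because $2(4\lambda_x+3\lambda_w)\ge 4(\lambda_x+\lambda_w)$. (Strictly speaking, this route yields a decay rate of a different form than the $\kappa$ stated in the theorem, so you would need extra work to recover that exact constant, or settle for a different explicit rate.)

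The genuine problem is in your continuity argument, and it is exactly the bookkeeping you deferred. You invoke $\max_{i,j}|\hat x_i-\hat x_j|^2\le 4X$; the sharp bound is $|\hat x_i-\hat x_j|^2\le 2|\hat x_i|^2+2|\hat x_j|^2\le 2X$, and the factor matters. The paper proves the quantitative lower bound $\mathcal{E}_1\ge\big(1-\tfrac{M_0}{2}\big)\lambda_w X$ in (\ref{positive}) --- this uses $\alpha^2<2M_0\lambda_w$, i.e.\ the combination of (\ref{parameter assumption}) and (\ref{assumption}), not merely the abstract positive-definiteness you cite --- and then closes the chain
\begin{align*}
|\hat x_{i_0}(t)-\hat x_{j_0}(t)|^2\le 2X(t)\le \frac{2\,\mathcal{E}_1(0)}{\big(1-\tfrac{M_0}{2}\big)\lambda_w}<3D_0^2<(2D_0)^2,
\end{align*}
with margin $4/3$. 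With your factor $4$ the same chain yields only $6D_0^2$, which does \emph{not} contradict the diameter reaching $2D_0$, so $T^\ast=\infty$ does not follow: the constant $\tfrac{3}{2}$ in (\ref{assumption2}) is not calibrated to survive a lost factor of $2$. Likewise, a generic comparability constant in ``$X\lesssim\mathcal{E}_1$'' cannot yield the needed strict inequality; you must use the specific constant $\big(1-\tfrac{M_0}{2}\big)\lambda_w$. Both slips are repairable in one line each, but as written your bootstrap does not close.
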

\begin{remark}\label{rmk1}
(1) $\mathcal{E}_1$ is always non-negative under the condition of Theorem \ref{main theorem 1}.
(2)  The set of $(x,v)$ satisfying the above conditions (\ref{assumption}) and (\ref{assumption2}) is non-empty. The proof of these remarks will be given at the end of Section 4.
\end{remark}
\noindent An immediate corollary is that the market shows an exponentially fast herding phenomena.
\begin{corollary}\label{cor3.5} Under the assumptions in Theorem \ref{main theorem 1}, herding occurs exponentially fast in the market:
\[
|\hat{x}_i(t)-\hat{x}_j(t)|^2\leq C_1e^{-\kappa t}\mathcal{E}_1(0),\quad |\hat{v}_i(t)-\hat{v}_j(t)|^2\leq C_2e^{-\kappa t}\mathcal{E}_1(0) \qquad (1\leq i,j \leq N)
\]
for
\[
C_1=\frac{4}{(2-M_0)\lambda_w}
\quad \text{and} \quad C_2=\frac{4M_0\lambda_w}{2M_0\lambda_w-\alpha^2}.
\]
Here, $\alpha$ denotes $$\alpha=\frac{2\lambda_x^2}{( \lambda_x+ \lambda_w)\lambda_v }$$
and $M_0$ is defined by (\ref{M0}).
\end{corollary}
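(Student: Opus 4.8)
The plan is to read the corollary off from the exponential bound $\mathcal{E}_1(t)\leq e^{-\kappa t}\mathcal{E}_1(0)$ of Theorem \ref{main theorem 1}. Since the conclusion asserts decay of $|\hat{x}_i-\hat{x}_j|^2$ and $|\hat{v}_i-\hat{v}_j|^2$ at the \emph{same} rate $\kappa$ that already governs $\mathcal{E}_1$, the whole task reduces to two coercivity estimates: I must bound the pairwise deviations above by $\mathcal{E}_1(t)$, i.e. produce constants with $|\hat{x}_i(t)-\hat{x}_j(t)|^2\leq C_1\mathcal{E}_1(t)$ and $|\hat{v}_i(t)-\hat{v}_j(t)|^2\leq C_2\mathcal{E}_1(t)$, and then substitute the exponential bound. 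So the only real content is to lower-bound the (indefinite-looking) quadratic form $\mathcal{E}_1=\lambda_w X+\alpha C(X,V)+V$, with $\alpha=\tfrac{2\lambda_x^2}{(\lambda_x+\lambda_w)\lambda_v}$, by $X$ and by $V$ separately.

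First I would record the elementary deviation bounds. Using the triangle inequality together with $|\hat{x}_i|^2+|\hat{x}_j|^2\leq\sum_k|\hat{x}_k|^2=X$, one gets $|\hat{x}_i-\hat{x}_j|^2\leq 2(|\hat{x}_i|^2+|\hat{x}_j|^2)\leq 2X(t)$ for every pair, and likewise $|\hat{v}_i-\hat{v}_j|^2\leq 2V(t)$. Next I would coerce $\mathcal{E}_1$ from below. By Cauchy--Schwarz applied to $(\hat{x}_i)$ and $(\hat{v}_i)$ in $\mathbb{R}^{MN}$, $C(X,V)=\sum_i\hat{x}_i\cdot\hat{v}_i\geq-\sqrt{X V}$, and since $\alpha>0$ this gives $\mathcal{E}_1\geq\lambda_w X-\alpha\sqrt{XV}+V$. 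Completing the square in $\sqrt{V}$ yields $\mathcal{E}_1\geq(\lambda_w-\tfrac{\alpha^2}{4})X$, while completing the square in $\sqrt{X}$ yields $\mathcal{E}_1\geq(1-\tfrac{\alpha^2}{4\lambda_w})V$.

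It remains to convert these into the stated constants $C_1,C_2$, which carry the factor $M_0$. The input I would draw from Theorem \ref{main theorem 1} is the positivity $\alpha^2<2M_0\lambda_w$: this is precisely the statement that the denominator $2M_0\lambda_w-\alpha^2$ of $C_2$ is positive, and it follows from the parameter condition making $\kappa$ positive (equivalently from $\tfrac12(\lambda_v/\lambda_x)^2\lambda_w M_0>1$ via $\lambda_x^2\leq(\lambda_x+\lambda_w)^2$). Feeding $\alpha^2\leq 2M_0\lambda_w$ into the first coercivity bound gives $\lambda_w-\tfrac{\alpha^2}{4}\geq\lambda_w(1-\tfrac{M_0}{2})$, hence (using $M_0<1$ so that $1-\tfrac{M_0}{2}>0$)
\[
|\hat{x}_i(t)-\hat{x}_j(t)|^2\leq 2X(t)\leq\frac{2}{\lambda_w(1-\tfrac{M_0}{2})}\,\mathcal{E}_1(t)=\frac{4}{(2-M_0)\lambda_w}\,\mathcal{E}_1(t)=C_1\mathcal{E}_1(t).
\]
For the velocity bound I would note that $M_0<1\leq 2$ forces $1-\tfrac{\alpha^2}{4\lambda_w}\geq 1-\tfrac{\alpha^2}{2M_0\lambda_w}>0$, so the second coercivity bound gives $V(t)\leq\tfrac{2M_0\lambda_w}{2M_0\lambda_w-\alpha^2}\mathcal{E}_1(t)$, whence $|\hat{v}_i(t)-\hat{v}_j(t)|^2\leq 2V(t)\leq C_2\mathcal{E}_1(t)$. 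Combining both with $\mathcal{E}_1(t)\leq e^{-\kappa t}\mathcal{E}_1(0)$ finishes the proof.

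This corollary is essentially bookkeeping once Theorem \ref{main theorem 1} is in hand, so I do not expect a genuine obstacle; the one point requiring care is the direction of the inequalities when trading the ``clean'' coercivity constants $\lambda_w-\tfrac{\alpha^2}{4}$ and $1-\tfrac{\alpha^2}{4\lambda_w}$ for the $M_0$-weighted ones appearing in $C_1,C_2$. That step, and the positivity $\alpha^2<2M_0\lambda_w$ it relies on, is exactly where the hypotheses of the theorem (and the uniform lower bound $\hat\phi_{ij}\geq M_0$ coming from the invariance estimate \eqref{2bae}) enter, and it is the only place one must verify that the constants are well defined and the bounds point the right way.
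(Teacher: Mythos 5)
Your proof is correct, and its overall strategy is the paper's: bound pairwise deviations by $2X$ and $2V$, establish coercivity of the quadratic form $\mathcal{E}_1=\lambda_w X+\alpha C(X,V)+V$ over $X$ and over $V$, and substitute $\mathcal{E}_1(t)\leq e^{-\kappa t}\mathcal{E}_1(0)$. The difference is purely in how the coercivity is produced. The paper's corollary proof is two lines long because it simply cites the bound $X(t)\leq \mathcal{E}_1(0)e^{-\kappa t}/\big((1-\tfrac{M_0}{2})\lambda_w\big)$, i.e.\ (\ref{now}), which was already obtained in Step 2 of the proof of Theorem \ref{main theorem 1} from the \emph{exact} $M_0$-weighted decomposition (\ref{positive}),
\begin{align*}
\mathcal{E}_1=\Big(1-\tfrac{M_0}{2}\Big)\lambda_w X+\frac{M_0\lambda_w}{2}\sum_{i=1}^{N}\Big|\hat{x}_i+\frac{\alpha}{M_0\lambda_w}\hat{v}_i\Big|^2+\Big(1-\frac{\alpha^2}{2M_0\lambda_w}\Big)V;
\end{align*}
discarding all but the $X$-term gives $C_1$, and discarding all but the $V$-term gives $C_2$ (this is what ``the proof for $V$ is similar'' means). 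You instead rebuild the coercivity from scratch via Cauchy--Schwarz, $C(X,V)\geq-\sqrt{XV}$, and two scalar completions of squares, obtaining the sharper intermediate constants $\lambda_w-\tfrac{\alpha^2}{4}$ and $1-\tfrac{\alpha^2}{4\lambda_w}$, which you then deliberately weaken using $\alpha^2<2M_0\lambda_w$ and $M_0\leq 1$ to land on the stated $C_1$, $C_2$; as a by-product your argument shows these constants are not optimal. Both routes hinge on exactly the same key inequality $\alpha^2<2M_0\lambda_w$, derived exactly as you say from (\ref{parameter assumption}) and (\ref{assumption}) via $\lambda_x^2\leq(\lambda_x+\lambda_w)^2$. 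One small correction of attribution: that inequality involves only the parameters and the definition (\ref{M0}) of $M_0$, so the uniform bound $\hat{\phi}_{ij}\geq M_0$ and the invariance estimate (\ref{2bae}) are \emph{not} inputs to the corollary's coercivity step; they are used inside the theorem's proof to get the decay of $\mathcal{E}_1$ in the first place.
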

\subsection{Main result II - Herding without decaying rate:}
We define another market energy $\mathcal{E}_2(t)$ that eventually vanishes even without any restrictions on  $\lambda_x,\lambda_v,\lambda_w>0$
and initial configurations. First, we define
\begin{align}
\displaystyle S_{\gamma}(t):=
\begin{cases}
\displaystyle \,\frac{1}{N}\sum_{i=1}^{N}\sum_{j=1}^N\bigg\{  \frac{1}{(1+|\hat{x}_i-\hat{x}_j|^2)^{\frac{r-2}{2}} }-1\bigg\} , & (0 < \gamma < 2)\\
\displaystyle \,\frac{1}{N}\sum_{i=1}^{N}\sum_{j=1}^N \log{(1+|\hat{x}_i-\hat{x}_j|^2)}, & (\gamma = 2)\\
\displaystyle \,\frac{1}{N}\sum_{i=1}^{N}\sum_{j=1}^N  \bigg\{1-\frac{1}{(1+|\hat{x}_i-\hat{x}_j|^2)^{\frac{r-2}{2}} }\bigg\}. &  (\gamma >2)\\
\end{cases}
\end{align}
Note that $S_{\gamma}(t) \geq 0$ for all $t\geq 0$.
Now, for any $\lambda_x, \lambda_v, \lambda_w>0$, we define the herding energy $\mathcal{E}_2(t)$ of the market by
\[
\mathcal{E}_2(t):=  \lambda_w X(t) +V(t) + \lambda_x\beta_{\gamma}^{-1}S_\gamma(t),
\]	
where $\beta_{\gamma} > 0$ is
\begin{align*}
\beta_\gamma=
\begin{cases}		
2-\gamma, & (0 <\gamma < 2)\\
2, & (\gamma = 2)\\
\gamma-2. & (\gamma > 2)
\end{cases}	
\end{align*}
\begin{theorem}\label{main theorem 2}
For any positive constants $\lambda_x$, $\lambda_v$ and $\lambda_w$ with initial data $\{x_i(0),v_i(0)\}_{i=1}^N$, $\mathcal{E}_2(t)$ goes to $0$ as $t$ tends to $\infty$.
\end{theorem}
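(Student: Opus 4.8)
The plan is to exhibit $\mathcal{E}_2$ as a Lyapunov functional for the centralized system \eqref{centralized herding model} and then invoke the LaSalle invariance principle. Since $\lambda_w,\lambda_x,\beta_\gamma>0$ and $X,V,S_\gamma\geq 0$, the functional $\mathcal{E}_2$ is manifestly non-negative, so the real content is that it is non-increasing and that its dissipation drives the whole configuration to the origin.

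First I would compute the dissipation rate. Differentiating along trajectories and using $\dot{\hat{x}}_i=\hat{v}_i$ gives $\frac{d}{dt}(\lambda_w X)=2\lambda_w C(X,V)$. For $V$ I would insert the $\hat{v}_i$-equation and symmetrize each double sum using $\hat{\phi}_{ij}=\hat{\phi}_{ji}$ (the weighted analogue of \eqref{symm}), which converts the $\lambda_v$-coupling into $-\lambda_v V_\phi$, the $\lambda_x$-coupling into $-\lambda_x C_\phi$, and the $\lambda_w$-term into $-2\lambda_w C(X,V)$. Adding these, the covariance terms cancel and I obtain
\[
\frac{d}{dt}\big(\lambda_w X + V\big)=-\lambda_x C_\phi-\lambda_v V_\phi.
\]
The indefinite term $-\lambda_x C_\phi$ is exactly what $S_\gamma$ is engineered to absorb: differentiating each summand of $S_\gamma$ via $\frac{d}{dt}|\hat{x}_i-\hat{x}_j|^2=2(\hat{x}_i-\hat{x}_j)\cdot(\hat{v}_i-\hat{v}_j)$ together with the identity $(1+|\hat{x}_i-\hat{x}_j|^2)^{-\gamma/2}=\hat{\phi}_{ij}$ yields $\frac{d}{dt}S_\gamma=\beta_\gamma C_\phi$ in all three regimes of $\gamma$, the constant $\beta_\gamma$ being precisely the chain-rule factor that appears. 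Hence $\frac{d}{dt}(\lambda_x\beta_\gamma^{-1}S_\gamma)=\lambda_x C_\phi$ and
\[
\frac{d}{dt}\mathcal{E}_2(t)=-\lambda_v V_\phi(t)\leq 0.
\]
This cancellation is the crux of the construction, and verifying it uniformly across the cases $0<\gamma<2$, $\gamma=2$, $\gamma>2$ is the one genuinely case-dependent computation.

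With dissipation in hand the remaining steps are structural. Because $\mathcal{E}_2$ is non-increasing and each of its terms is non-negative, $\lambda_w X(t)\leq\mathcal{E}_2(0)$ and $V(t)\leq\mathcal{E}_2(0)$, so the orbit $\{(\hat{x}(t),\hat{v}(t))\}$ stays in a bounded—hence, within the zero-mean manifold $\{\hat{x}_c=\hat{v}_c=0\}$, compact—positively invariant set. LaSalle then guarantees convergence to the largest invariant set $M$ contained in $\{V_\phi=0\}$. Since $\hat{\phi}_{ij}>0$ everywhere, $V_\phi=0$ forces $\hat{v}_1=\cdots=\hat{v}_N$, which together with $\hat{v}_c=0$ gives $\hat{v}\equiv 0$.

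Finally I would identify $M$. On an invariant trajectory with $\hat{v}\equiv 0$ the first equation makes $\hat{x}$ constant and the second forces $\frac{\lambda_x}{N}\sum_j\hat{\phi}_{ij}(\hat{x}_j-\hat{x}_i)-\lambda_w\hat{x}_i=0$ for every $i$; taking the inner product with $\hat{x}_i$, summing over $i$, and symmetrizing the weighted double sum yields $-\tfrac{\lambda_x}{2}X_\phi-\lambda_w X=0$, and since both terms are non-positive this forces $X=0$, i.e. $\hat{x}\equiv 0$. Thus $M=\{(0,0)\}$, so $(\hat{x}(t),\hat{v}(t))\to(0,0)$ and, by continuity of $\mathcal{E}_2$ with $\mathcal{E}_2(0,0)=0$, we conclude $\mathcal{E}_2(t)\to 0$. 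The main obstacle is less any single estimate than assembling the LaSalle argument cleanly: one must confirm precompactness of the orbit and correctly compute the largest invariant set, the latter relying on the strict positivity of $\hat{\phi}_{ij}$ and on the zero-mean constraint to collapse $\{V_\phi=0\}$ onto the single equilibrium.
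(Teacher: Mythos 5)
Your proposal is correct, and it follows the paper's overall framework (the same Lyapunov functional $\mathcal{E}_2$, the same dissipation identity $\mathcal{E}_2'=-\lambda_v V_\phi\leq 0$ obtained from the cancellation $\frac{d}{dt}(\lambda_x\beta_\gamma^{-1}S_\gamma)=\lambda_x C_\phi$, boundedness of orbits, and LaSalle's invariance principle). Where you genuinely diverge from the paper is in the identification of the largest invariant set $\mathcal{M}\subseteq\{V_\phi=0\}$. The paper does this through a second- and third-derivative analysis of $\mathcal{E}_2$: it proves (Lemma \ref{equilibrium}) that at any time where the $\hat{v}_i$ coincide but the $\hat{x}_i$ do not, one has $\mathcal{E}_2'=\mathcal{E}_2''=0$ while $\mathcal{E}_2'''<0$ (via an extremal-coordinate argument showing some $\hat{v}'^d_{i_0}-\hat{v}'^d_{j_0}>0$), and then derives a contradiction from Taylor's theorem on an interval where a nontrivial trajectory would have to stay in $E$. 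Your route is more elementary and more direct: on an invariant trajectory in $E$ one has $\hat{v}\equiv 0$, hence $\hat{v}'\equiv 0$, hence the force balance $\frac{\lambda_x}{N}\sum_j\hat{\phi}_{ij}(\hat{x}_j-\hat{x}_i)-\lambda_w\hat{x}_i=0$; pairing with $\hat{x}_i$, summing, and symmetrizing gives $-\frac{\lambda_x}{2}X_\phi-\lambda_w X=0$, forcing $X=0$ since both terms have a sign. This replaces the paper's key lemma entirely, avoids computing $\mathcal{E}_2''$ and $\mathcal{E}_2'''$ altogether, and exploits the damping term $-\lambda_w\hat{x}_i$ to conclude $\hat{x}\equiv 0$ (not merely that all $\hat{x}_i$ coincide). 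The paper's heavier argument does yield some extra dynamical information — it quantifies how a configuration with dispersed positions and aligned velocities is instantly expelled from $E$ — but for the purpose of Theorem \ref{main theorem 2} your algebraic identification of $\mathcal{M}=\{(0,0)\}$ is sufficient and cleaner.
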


\begin{remark}\label{rmk2}
(1) For any positive constants $\lambda_x$, $\lambda_v$ and $\lambda_w$, $\mathcal{E}_2$ is non-negative.
(2) Theorem \ref{main theorem 2} generalizes Theorem \ref{main theorem 1} in the sense that we do not impose any restriction on the initial configuration, nor on the parameters  $\lambda_x$, $\lambda_v$ and $\lambda_w$ except for positivity. Explicit herding rate, however, is not available in this case.
\end{remark}
\noindent This leads to the following general herding phenomena, which holds unconditionally.
\begin{corollary} Under the assumptions in Theorem \ref{main theorem 2}, the herding phenomena occurs in the market:
	\[
	\lim\limits_{t\rightarrow \infty} |\hat{x}_i(t)-\hat{x}_j(t)|=0,\quad\lim\limits_{t\rightarrow \infty} |\hat{v}_i(t)-\hat{v}_j(t)|=0.\qquad (1\leq i,j\leq N)
	\]
\end{corollary}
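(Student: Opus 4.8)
The plan is to \emph{assume} Theorem \ref{main theorem 2}, which supplies $\mathcal{E}_2(t)\to 0$ as $t\to\infty$, and to extract from this single decay statement the pairwise convergence of the centralized variables. The crucial structural feature I would exploit is that $\mathcal{E}_2(t)$ is a sum of three \emph{separately non-negative} pieces, one of which is the \emph{unweighted} $L^2$ deviation $X(t)=\sum_{i=1}^{N}|\hat{x}_i(t)|^2$, while $V(t)$ likewise enters with a positive coefficient. This is exactly what makes the deduction immediate, in contrast to a hypothetical situation where only the weighted functionals $X_\phi$, $V_\phi$ were controlled and one would have to worry about the communication rate $\hat{\phi}_{ij}$ degenerating as the deviations grow.

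First I would record the sign information: $X(t)\ge 0$, $V(t)\ge 0$, and $S_\gamma(t)\ge 0$ (the last noted right after the definition of $S_\gamma$), while $\lambda_w>0$ and $\lambda_x\beta_\gamma^{-1}>0$. Consequently
\[
0\le \lambda_w X(t)\le \mathcal{E}_2(t),\qquad 0\le V(t)\le \mathcal{E}_2(t)\qquad\text{for all }t\ge 0 .
\]
Since $\mathcal{E}_2(t)\to 0$ by Theorem \ref{main theorem 2}, the squeeze then forces $X(t)\to 0$ and $V(t)\to 0$ as $t\to\infty$.

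Next I would pass from the aggregate decay of $X$ and $V$ to the decay of each individual deviation. Because $X(t)=\sum_{i=1}^{N}|\hat{x}_i(t)|^2$ is a finite sum of non-negative terms, $X(t)\to 0$ yields $|\hat{x}_i(t)|^2\to 0$, hence $|\hat{x}_i(t)|\to 0$, for every fixed $i$; the same argument applied to $V$ gives $|\hat{v}_i(t)|\to 0$. Finally, the triangle inequality
\[
|\hat{x}_i(t)-\hat{x}_j(t)|\le |\hat{x}_i(t)|+|\hat{x}_j(t)|,\qquad |\hat{v}_i(t)-\hat{v}_j(t)|\le |\hat{v}_i(t)|+|\hat{v}_j(t)|
\]
lets both right-hand sides tend to $0$, so $\lim_{t\to\infty}|\hat{x}_i(t)-\hat{x}_j(t)|=0$ and $\lim_{t\to\infty}|\hat{v}_i(t)-\hat{v}_j(t)|=0$ for all $1\le i,j\le N$, which is precisely the herding phenomena. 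In fact this yields the stronger conclusion that each centralized variable converges to $0$, consistent with the identity $\hat{x}_c\equiv\hat{v}_c\equiv 0$.

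There is essentially no analytic obstacle here, since the entire quantitative content was already packaged into Theorem \ref{main theorem 2}. The only point demanding care is the justification that each summand of $\mathcal{E}_2$ tends to zero, which rests on the sign conditions $X,V,S_\gamma\ge 0$ together with the \emph{strict} positivity of the coefficients $\lambda_w$ and $\lambda_x\beta_\gamma^{-1}$; dropping positivity of $\lambda_w$ would break the step that controls $X$. I would therefore emphasize this non-negativity (already flagged in Remark \ref{rmk2}(1)) as the one hypothesis that is genuinely used in the deduction.
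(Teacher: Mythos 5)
Your proposal is correct and is essentially the argument the paper intends: the corollary follows from Theorem \ref{main theorem 2} because $\lambda_w X(t)$ and $V(t)$ are non-negative summands of $\mathcal{E}_2(t)$ (with $S_\gamma \geq 0$), so $\mathcal{E}_2(t)\to 0$ forces $X(t),V(t)\to 0$, and pairwise differences are controlled by $X$ and $V$ exactly as in the paper's proof of Corollary \ref{cor3.5} (where $|\hat{x}_i-\hat{x}_j|^2\leq 2X(t)$ plays the role of your triangle inequality). No gaps; your extra observation that each $\hat{x}_i,\hat{v}_i\to 0$ individually is a valid strengthening consistent with $\hat{x}_c=\hat{v}_c=0$.
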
	

%
%
%
%
%
%

\section{Proof of Theorem \ref{main theorem 1}: Exponentially fast herding }
Before we delve into the proof of the main theorem, we establish several technical lemmas.
\begin{lemma}\label{X,V}
We have
\begin{align*}
\frac{d}{dt}X(t)&=2C(X,V)(t),\cr
\frac{d}{dt}V(t)
&=-2\lambda_w C(X,V)(t)-\lambda_x C_{\phi}(X,V)(t)-\lambda_vV_{\phi}(t).
\end{align*}
\end{lemma}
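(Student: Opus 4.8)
The plan is to compute both derivatives directly from the centralized model \eqref{centralized herding model} by the chain rule, treating $X$ first (where the computation is immediate) and then $V$ (where the symmetry of the communication weights does the real work). Throughout I would use only the algebraic fact that $\hat{\phi}_{ij}=\hat{\phi}_{ji}$, which is evident from the definition $\hat{\phi}_{ij}=(1+|\hat{x}_i-\hat{x}_j|^2)^{-\gamma/2}$.

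For $X(t)=\sum_i|\hat{x}_i|^2$, I would differentiate term by term and substitute $d\hat{x}_i/dt=\hat{v}_i$ from the first line of \eqref{centralized herding model}, obtaining $\frac{d}{dt}X=2\sum_i\hat{x}_i\cdot\hat{v}_i$, which is exactly $2C(X,V)$ by definition. No symmetrization is needed here. For $V(t)=\sum_i|\hat{v}_i|^2$ I would again differentiate to get $\frac{d}{dt}V=2\sum_i\hat{v}_i\cdot\frac{d\hat{v}_i}{dt}$ and then insert the three terms on the right-hand side of the second line of \eqref{centralized herding model}. The drift contribution $-\lambda_w\hat{x}_i$ immediately yields $-2\lambda_w\sum_i\hat{v}_i\cdot\hat{x}_i=-2\lambda_w C(X,V)$, leaving the $\lambda_v$-sum $\frac{2\lambda_v}{N}\sum_{i,j}\hat{\phi}_{ij}\,\hat{v}_i\cdot(\hat{v}_j-\hat{v}_i)$ and the $\lambda_x$-sum $\frac{2\lambda_x}{N}\sum_{i,j}\hat{\phi}_{ij}\,\hat{v}_i\cdot(\hat{x}_j-\hat{x}_i)$ to be dealt with.

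The core of the argument, routine but the place where signs must be watched, is the symmetrization of these two double sums, exactly as in Cucker--Smale type energy estimates. Since $\hat{\phi}_{ij}=\hat{\phi}_{ji}$, I would relabel $i\leftrightarrow j$ in each sum and average it against the original copy. For the velocity sum this invokes the identity $\hat{v}_i\cdot(\hat{v}_j-\hat{v}_i)+\hat{v}_j\cdot(\hat{v}_i-\hat{v}_j)=-|\hat{v}_i-\hat{v}_j|^2$, which collapses it to $-\frac{\lambda_v}{N}\sum_{i,j}\hat{\phi}_{ij}|\hat{v}_i-\hat{v}_j|^2=-\lambda_v V_\phi$. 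For the mixed sum the analogous step uses $\hat{v}_i\cdot(\hat{x}_j-\hat{x}_i)+\hat{v}_j\cdot(\hat{x}_i-\hat{x}_j)=(\hat{v}_i-\hat{v}_j)\cdot(\hat{x}_j-\hat{x}_i)=-(\hat{v}_i-\hat{v}_j)\cdot(\hat{x}_i-\hat{x}_j)$, giving $-\lambda_x C_\phi(X,V)$.

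Collecting the three contributions then produces the claimed formula $\frac{d}{dt}V=-2\lambda_w C(X,V)-\lambda_x C_\phi(X,V)-\lambda_v V_\phi$. The only point demanding genuine care is keeping the factor of $\tfrac12$ from the averaging consistent with the factor of $2$ from differentiating $|\hat{v}_i|^2$, and tracking the minus sign that turns each symmetrized pairing into a squared (or cross) difference; this is where the signs in front of $V_\phi$ and $C_\phi(X,V)$ originate, and it is the one step I would verify explicitly rather than assert.
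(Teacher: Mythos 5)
Your proposal is correct and follows essentially the same route as the paper's own proof: both differentiate $X$ and $V$ directly, substitute the right-hand side of \eqref{centralized herding model}, and then symmetrize the $\lambda_x$- and $\lambda_v$-double sums by swapping $i\leftrightarrow j$ (using $\hat{\phi}_{ij}=\hat{\phi}_{ji}$) to produce $-\lambda_x C_{\phi}(X,V)$ and $-\lambda_v V_{\phi}$, with the factor $2$ from differentiating the squares cancelling against the $\tfrac12$ from averaging exactly as you describe. No gap.
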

\begin{proof}
The first identity is immediate:
\begin{align*}
\frac{d}{dt}\sum_{i=1}^{N}|\hat{x}_i(t)|^2	
&=2\sum_{i=1}^{N}\hat{x}_i(t)\cdot \hat{v}_i(t).
\end{align*}
For the second one, we compute
\begin{align*}
\frac{d}{dt}\sum_{i=1}^{N}|\hat{v}_i(t)|^2
&=2\sum_{i=1}^{N}\hat{v}_i(t) \cdot \bigg\{-\lambda_w \hat{x}_i(t)
+\frac{\lambda_x}{N}\sum_{j=1}^N\hat{\phi}_{ij}(\hat{x}_j(t)-\hat{x}_i(t))+\frac{\lambda_v}{N}\sum_{j=1}^N\hat{\phi}_{ij}(\hat{v}_j(t)-\hat{v}_i(t)) \bigg\}\\
&\qquad\equiv I+II+III.
\end{align*}
Then, clearly,	
\begin{align*}
I&=-2\lambda_w\sum_{i=1}^{N} \hat{x}_i(t) \cdot \hat{v}_i(t) .
\end{align*}
In view of (\ref{symm}), we have
\begin{align*}
II&=\frac{2\lambda_x}{N}\sum_{i=1}^{N} \sum_{j=1}^N\hat{\phi}_{ij} \hat{v}_i \cdot (\hat{x}_j-\hat{x}_i)\\
&=-\frac{\lambda_x}{N}\sum_{i=1}^{N}\sum_{j=1}^N\hat{\phi}_{ij}(\hat{x}_i-\hat{x}_j)\cdot (\hat{v}_i-\hat{v}_j).
\end{align*}
Similarly,
\begin{align*}
III&=\frac{2\lambda_v}{N}\sum_{i=1}^{N}\sum_{j=1}^N\hat{\phi}_{ij}\hat{v}_i \cdot (\hat{v}_j-\hat{v}_i)\\
&=-\frac{\lambda_v}{N}\sum_{i=1}^{N}\sum_{j=1}^N\hat{\phi}_{ij}|\hat{v}_i-\hat{v}_j|^2.\cr
\end{align*}
Therefore,
\begin{align*}
\frac{d}{dt}\sum_{i=1}^{N}|\hat{v}_i(t)|^2
&=-2\lambda_w\sum_{i=1}^{N}\hat{x}_i \cdot \hat{v}_i\\
&-\frac{\lambda_x}{N}\sum_{i=1}^{N}\sum_{j=1}^N\hat{\phi}_{ij}(\hat{x}_i-\hat{x}_j)\cdot (\hat{v}_i-\hat{v}_j)\\
&-\frac{\lambda_v}{N}\sum_{i=1}^{N}\sum_{j=1}^N\hat{\phi}_{ij}|\hat{v}_i-\hat{v}_j|^2.
\end{align*}
\end{proof}

We also need to consider the time evolution of $C(X,V)(t)$:
\begin{lemma}\label{C(X,V)}
For any $t>0$
\begin{align*}
\frac{d}{dt}C(X,V)(t)&= V(t) -\lambda_w X(t)
-\frac{\lambda_x}{2}X_{\phi}(t)
-\frac{\lambda_v}{2} C_{\phi}(X,V)(t).
\end{align*}
\end{lemma}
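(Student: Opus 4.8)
The plan is to differentiate the covariance functional $C(X,V)(t)=\sum_{i=1}^N \hat{x}_i(t)\cdot\hat{v}_i(t)$ directly by the product rule and then insert the centralized dynamics $(\ref{centralized herding model})$, exactly in the spirit of the proof of Lemma \ref{X,V}. Writing
\[
\frac{d}{dt}C(X,V)(t)=\sum_{i=1}^N \dot{\hat{x}}_i\cdot\hat{v}_i+\sum_{i=1}^N \hat{x}_i\cdot\dot{\hat{v}}_i,
\]
the first sum is immediate: since $\dot{\hat{x}}_i=\hat{v}_i$, it equals $\sum_i|\hat{v}_i|^2=V(t)$. For the second sum I would substitute the expression for $\dot{\hat{v}}_i$ from $(\ref{centralized herding model})$, which splits it into three contributions coming from the $\lambda_x$, $\lambda_v$, and $\lambda_w$ terms respectively.

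The $\lambda_w$-contribution is the easiest: $-\lambda_w\sum_i \hat{x}_i\cdot\hat{x}_i=-\lambda_w X(t)$, which supplies the second term on the right-hand side. The two remaining contributions are the interaction double sums
\[
\frac{\lambda_x}{N}\sum_{i=1}^N\sum_{j=1}^N\hat{\phi}_{ij}\,\hat{x}_i\cdot(\hat{x}_j-\hat{x}_i)
\quad\text{and}\quad
\frac{\lambda_v}{N}\sum_{i=1}^N\sum_{j=1}^N\hat{\phi}_{ij}\,\hat{x}_i\cdot(\hat{v}_j-\hat{v}_i),
\]
and these I would rewrite using the same symmetrization device that produced $(\ref{symm})$ and the identities $II$, $III$ in Lemma \ref{X,V}. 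The key algebraic fact is that for a symmetric weight $\hat{\phi}_{ij}=\hat{\phi}_{ji}$ and any vectors $a_i,b_i$, relabeling $i\leftrightarrow j$ and averaging gives
\[
\sum_{i,j}\hat{\phi}_{ij}\,a_i\cdot(b_j-b_i)=-\frac{1}{2}\sum_{i,j}\hat{\phi}_{ij}\,(a_i-a_j)\cdot(b_i-b_j).
\]
Applying this with $(a,b)=(\hat{x},\hat{x})$ turns the first double sum into $-\tfrac{\lambda_x}{2N}\sum_{i,j}\hat{\phi}_{ij}|\hat{x}_i-\hat{x}_j|^2=-\tfrac{\lambda_x}{2}X_\phi(t)$, and with $(a,b)=(\hat{x},\hat{v})$ it turns the second into $-\tfrac{\lambda_v}{2N}\sum_{i,j}\hat{\phi}_{ij}(\hat{x}_i-\hat{x}_j)\cdot(\hat{v}_i-\hat{v}_j)=-\tfrac{\lambda_v}{2}C_\phi(X,V)(t)$. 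Collecting the four pieces yields precisely the claimed identity.

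There is no genuine analytic obstacle here; the lemma is a pure bookkeeping computation, so the only thing that requires care is the symmetrization step, where it is easy to mislay the factor $\tfrac12$ or a sign. I would double-check those two reductions exactly as in Lemma \ref{X,V}, noting in particular that in $C(X,V)$ the position variable $\hat{x}_i$ (not $\hat{v}_i$) multiplies the dynamics, which is why the $\lambda_x$-term now collapses to the quadratic deviation $X_\phi$ rather than to a covariance, whereas the $\lambda_v$-term produces the covariance $C_\phi$; getting this pairing right is the whole content of the proof.
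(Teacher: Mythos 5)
Your proposal is correct and follows essentially the same route as the paper's proof: product rule, substitution of the centralized dynamics, the trivial $\lambda_w$-term, and symmetrization in $i\leftrightarrow j$ to collapse the two interaction sums into $-\tfrac{\lambda_x}{2}X_\phi$ and $-\tfrac{\lambda_v}{2}C_\phi(X,V)$. The only cosmetic difference is that you state the symmetrization as a single general identity, whereas the paper carries it out termwise in its computations of $II_2$ and $II_3$.
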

\begin{proof} See
\begin{align*}
\frac{d}{dt}\sum_{i=1}^{N}\hat{x}_i(t)\cdot \hat{v}_i(t)
&=\sum_{i=1}^{N}\frac{d\hat{x}_i}{dt} \cdot  \hat{v}_i+\hat{x}_i\cdot \frac{d\hat{v}_i}{dt} \cr
&\equiv I+II.
\end{align*}
Computation for $I$ is direct:	
\begin{align*}
I&=\sum_{i=1}^{N}|\hat{v}_i|^2.
\end{align*}
For $II$, we consider
\begin{align*}
II&=\sum_{i=1}^{N}\hat{x}_i \cdot \Big\{
-\lambda_w \hat{x}_i(t)+\frac{\lambda_x}{N}\sum_{j=1}^N\hat{\phi}_{ij}(\hat{x}_j(t)-\hat{x}_i(t))+\frac{\lambda_v}{N}\sum_{j=1}^N\hat{\phi}_{ij}(\hat{v}_j(t)-\hat{v}_i(t))\Big\}\cr
&\equiv II_1+II_2+II_3.
\end{align*}
We then compute each term. $II_1$ clearly is
\begin{align*}
II_1=-\lambda_w \sum_{i=1}^{N}|\hat{x}_i(t)|^2.
\end{align*}
From (\ref{symm}) and a simple symmetry argument, we get
\begin{align*}
II_2
&\equiv
\frac{\lambda_x}{N}\sum_{i}\sum_j\hat{\phi}_{ij}\hat{x}_i \cdot (\hat{x}_j-\hat{x}_i)\cr
&=
-\frac{\lambda_x}{N}\sum_{i}\sum_j\hat{\phi}_{ij}\hat{x}_i\cdot(\hat{x}_i-\hat{x}_j)\cr
&=
-\frac{\lambda_x}{2N}\sum_{i}\sum_j\hat{\phi}_{ij}|\hat{x}_i-\hat{x}_j|^2
\end{align*}
and
\begin{align*}
II_3
&\equiv
\frac{\lambda_v}{N}\sum_{i}\sum_j\hat{\phi}_{ij}\hat{x}_i \cdot (\hat{v}_j-\hat{v}_i)\cr
&=
-\frac{\lambda_v}{N}\sum_{i}\sum_j\hat{\phi}_{ij}\hat{x}_i \cdot (\hat{v}_i-\hat{v}_j)\cr
&=
-\frac{\lambda_v}{2N}\sum_{i}\sum_j\hat{\phi}_{ij}(\hat{x}_i-\hat{x}_j) \cdot (\hat{v}_i-\hat{v}_j),
\end{align*}
so that
\begin{align*}
II&=-\lambda_w \sum_{i=1}^{N}|\hat{x}_i|^2 \\
&-\frac{\lambda_x}{2N}\sum_{i}\sum_j\hat{\phi}_{ij}|\hat{x}_i-\hat{x}_j|^2\\
&-\frac{\lambda_v}{2N}\sum_{i}\sum_j\hat{\phi}_{ij}(\hat{x}_i-\hat{x}_j)(\hat{v}_i-\hat{v}_j).
\end{align*}
\end{proof}

%
%
%
%
%
\subsection{Proof of Theorem \ref{main theorem 1}:} We divide the proof into the following two steps:\newline
\noindent{\bf \underline{Step 1}:} Step 1 is devoted to the proof of the following claim:\newline

\noindent {\bf Claim:} Define $T>0$ by
\[
T=\sup\Big\{t\in\mathbb{R}_+\,\big|\,\max_{i,j}|\hat{x}_i(t)-\hat{x}_j(t)|\leq 2\max_{i,j}|\hat{x}_i(0)-\hat{x}_j(0)|\Big\}.
\]
Then we have
\begin{align*}
\frac{d}{dt}\mathcal{E}_1(t)
&\leq - \kappa \mathcal{E}_1(t)
\end{align*}
for $0\leq t<T$. Here $\kappa>0$ is a constant defined in the statement of Theorem \ref{main theorem 1}.

\begin{proof}
Recall the definition of $\alpha$ in Corollary \ref{cor3.5}:
$$\alpha:=\frac{2\lambda_x^2}{( \lambda_x+ \lambda_w)\lambda_v }.$$
Then, by Lemma \ref{X,V} and Lemma \ref{C(X,V)}, we have
\begin{align}\label{we have}
\begin{split}
\frac{d}{dt}\mathcal{E}_1(t) &= 2 \lambda_w C(X,V) +\alpha\bigg\{V -\lambda_w X- \frac{\lambda_x}{2}X_{\phi}- \frac{\lambda_v}{2} C_{\phi}(X,V)\bigg\} \\
&+ \bigg\{-2\lambda_w C(X,V)
- \lambda_x C_{\phi}(X,V) - \lambda_vV_{\phi}\bigg\}    \cr
&= \alpha V-\alpha\lambda_w X
-\frac{\alpha\lambda_x}{2}\bigg\{  X_{\phi} + \Big(\frac{\lambda_v}{\lambda_x} + \frac{2}{\alpha}\Big) C_{\phi}(X,V) + \frac{\lambda_v}{\lambda_x}\frac{2}{\alpha}V_{\phi}\bigg\}.
\end{split}
\end{align}
Now, using
\begin{align*}
	X(t)&=\sum_{i=1}^{N}\hat{x}_i\cdot \hat{x}_i\\
	&=\sum_{i=1}^{N}(\hat{x}_i-\hat{x}_c)\cdot \hat{x}_i \qquad (\hat{x}_c=0)\\
	&=\frac{1}{N}\sum_{i=1}^{N}\sum_{j=1}^{N}(\hat{x}_i-\hat{x}_j)\cdot \hat{x}_i\\
	&=\frac{1}{2N}\sum_{i=1}^{N}\sum_{j=1}^{N}|\hat{x}_i-\hat{x}_j|^2\\
&=\frac{1}{2N}\sum_{i=1}^{N}\sum_{j=1}^{N}(1-\hat{\phi}_{ij})|\hat{x}_i-\hat{x}_j|^2
+\frac{1}{2}X_{\phi}(t),	
\end{align*}
we obtain
\begin{align*}
-\alpha\lambda_w X&=-\frac{1}{2}\alpha\lambda_w X-\frac{1}{2}\alpha\lambda_w X\cr	
&=-\frac{1}{2}\alpha \lambda_w X -\frac{ \alpha\lambda_w}{4N}\sum_{i=1}^{N}\sum_{j=1}^{N}(1-\hat{\phi}_{ij})|\hat{x}_i-\hat{x}_j|^2-\frac{1}{4}\alpha\lambda_wX_{\phi}(t).	
\end{align*}
Therefore, we have from (\ref{we have})
\begin{align*}
\frac{d}{dt}\mathcal{E}_1(t)
&= \alpha V-\frac{1}{2}\alpha \lambda_w X -\frac{ \alpha\lambda_w}{4N}\sum_{i=1}^{N}\sum_{j=1}^{N}(1-\hat{\phi}_{ij})|\hat{x}_i-\hat{x}_j|^2 \\
&-\frac{\alpha \lambda_x}{2}\bigg\{  (1+\frac{ \lambda_w}{2\lambda_x})X_{\phi} + \Big(\frac{\lambda_v}{\lambda_x} + \frac{2}{\alpha}\Big) C_{\phi}(X,V) + \frac{\lambda_v}{\lambda_x}\frac{2}{\alpha}V_{\phi}\bigg\}  \cr
&\leq \alpha V-\frac{1}{2}\alpha \lambda_w X\\
&-\frac{\alpha \lambda_x}{2}\bigg\{  (1+\frac{ \lambda_w}{2\lambda_x})X_{\phi} + \Big(\frac{\lambda_v}{\lambda_x} + \frac{2}{\alpha}\Big) C_{\phi}(X,V) + \frac{\lambda_v}{\lambda_x}\frac{2}{\alpha}V_{\phi}\bigg\}  \cr
&\equiv\alpha V-\frac{1}{2}\alpha \lambda_w X+R.
\end{align*}
We set $L=1+\frac{\lambda_w}{2\lambda_x}$ for simplicity, and compute
\begin{align}\label{conse}
\begin{split}
R&=
-\frac{\alpha\lambda_x}{2}\bigg\{  LX_{\phi} + \Big(\frac{\lambda_v}{\lambda_x}+\frac{2}{\alpha}\Big) C_{\phi}(X,V) + \frac{2\lambda_v}{\lambda_x \alpha}V_{\phi}\bigg\}\\
&=
-\frac{\alpha\lambda_x}{2}\bigg\{  L\bigg(X_{\phi} + \frac{1}{L}\Big(\frac{\lambda_v}{\lambda_x}+\frac{2}{\alpha}\Big) C_{\phi}(X,V) + \frac{1}{4L^2}\Big(\frac{\lambda_v}{\lambda_x}+\frac{2}{\alpha})^2V_{\phi}\bigg)
+\frac{\lambda_v^2\lambda_w}{2\lambda_x^3} V_{\phi}\bigg\}\\
%
&=
-\frac{\alpha\lambda_x}{2}  \bigg\{\frac{L}{N}\sum_{i=1}^{N}\sum_{j=1}^{N}\hat{\phi}_{ij}\bigg((\hat{x}_i-\hat{x}_j) + \frac{1}{2L}\Big(\frac{\lambda_v}{\lambda_x}+\frac{2}{\alpha}\Big) (\hat{v}_i-\hat{v}_j) \bigg)^2+ \frac{\lambda_v^2\lambda_w}{2\lambda_x^3} V_{\phi}\bigg\}\cr
&\leq- \frac{\alpha\lambda_v^2\lambda_w}{4\lambda_x^2}  V_{\phi}.
\end{split}
\end{align}
Since we are assuming
\[
\max_{i,j}|\hat{x}_i(t)-\hat{x}_j(t)|\leq 2\max_{i,j}|\hat{x}_i(0)-\hat{x}_j(0)|,
\]
we have
\[
\hat{\phi}_{ij}\geq M_0
\]
for $M_0$ defined in (\ref{M0}).
Therefore,
\begin{align*}
V_{\phi}&=\frac{1}{N}\sum_{i=1}^{N}\sum_{j=1}^{N}\hat{\phi}_{ij}|\hat{v}_i-\hat{v}_j|^2\\
&\geq \frac{M_0}{N}\sum_{i=1}^{N}\sum_{j=1}^{N}|\hat{v}_i-\hat{v}_j|^2\\
&=\frac{2M_0}{N} \sum_{i=1}^{N}\sum_{j=1}^{N}(\hat{v}_i-\hat{v}_j)\hat{v}_i\\\
&=2M_0 \sum_{i=1}^{N}(\hat{v}_i-\hat{v}_c)\hat{v}_i\\
&=2M_0  \sum_{i=1}^{N}|\hat{v}_i|^2, \qquad (\hat{v}_c=0)
\end{align*}
which gives from (\ref{conse})
\begin{align*}
R\leq- \frac{\alpha\lambda_v^2\lambda_w}{2\lambda_x^2}  M_0 V.
\end{align*}
Finally, we go back to (\ref{we have}) with these computations to derive
\begin{align}\label{5151}
\begin{split}
\frac{d}{dt}\mathcal{E}_1(t)
&\leq -\frac{\alpha\lambda_w}{2} X -\bigg\{-1+\frac{\lambda_v^2\lambda_w}{2\lambda_x^2}  M_0\bigg\} \alpha V \\
&\leq -\frac{2\lambda^2_x}{(\lambda_x+\lambda_w)\lambda_v}
\min\bigg\{\frac{1}{2},-1+\frac{1}{2}\left(\frac{\lambda_v}{\lambda_x}\right)^2 \lambda_w M_0\bigg\} (\lambda_w X+V).
\end{split}
\end{align}
Next, we need to show that  there exists $\delta>0$ such that \\
\begin{align}\label{such that}
-(\lambda_w X+V)\leq -\delta(\lambda_w X+\alpha C(X,V)+V),
\end{align}
which is equivalent to find $\delta$ such that
\begin{align*}
(1-\delta)\left\{\lambda_w X-\frac{\delta}{1-\delta}\alpha C(X,V)+V\right\}&\geq 0.
\end{align*}
This holds if
\begin{align*}
\frac{\delta^2}{(1-\delta)^2}\alpha^2-4\lambda_w\leq 0.
\end{align*}
Recalling the definition of $\alpha$, it can be rewritten as
\begin{align*}
\frac{\delta^2}{(1-\delta)^2}  &\leq \lambda_w \frac{\lambda_v^2}{\lambda_x^2}\left(\frac{\lambda_w}{\lambda_x}+1\right)^2,
\end{align*}
which holds true for sufficiently small $\delta>0$. Now, with this choice of $\delta$, we can combine
(\ref{5151}) and (\ref{such that}) to close the desired Gronwall inequality:
\begin{align*}
\frac{d}{dt}\mathcal{E}_1(t)
&\leq -\frac{2\lambda^2_x}{(\lambda_x+\lambda_w)\lambda_v}
\min\bigg\{\frac{1}{2},-1+\frac{1}{2}\left(\frac{\lambda_v}{\lambda_x}\right)^2 \lambda_w M_0\bigg\} \delta \mathcal{E}_1(t).
\end{align*}	
\end{proof}

\label{pageTinf}
\noindent \noindent{\bf \underline{Step 2}:} We now prove that $T$ prolongs to infinity:\newline

\noindent {\bf Claim:} $T=\infty$, that is,
\[
\max_{i,j}|\hat{x}_i(t)-\hat{x}_j(t)|< 2\max_{i,j}|\hat{x}_i(0)-\hat{x}_j(0)|
\]
for all $t>0$.
\begin{proof}
We first rewrite (\ref{assumption}) as
\[
\frac{1}{2}\left(\frac{\lambda_v}{\lambda_x}\right)^2\lambda_w M_0>1
\]
and use the positivity of $\lambda_x$ and $\lambda_w$ to get
\[
2\lambda_w M_0>\frac{4\lambda_x^4}{(\lambda_x+ \lambda_w)^2\lambda_v^2}=\alpha^2,
\]
or, equivalently
\[
0<\frac{\alpha^2}{2M_0\lambda_w}<1.
\]
With this and the fact that  $0<M_0<2$, which follows from the definition of $M_0$, we see that
\begin{align}\label{positive}
\begin{split}
0 &\leq \big\{ 1-\frac{M_0}{2}\big\} \lambda_w X(t)\cr
&\leq\big\{1-\frac{M_0}{2}\big\}\lambda_w X
+ \frac{M_0 \lambda_w}{2}\sum_{i=1}^{N} \Big(\hat{x}_i + \frac{\alpha}{M_0\lambda_w}\hat{v}_i\Big)^2
+ \bigg\{1- \frac{\alpha^2}{2M_0\lambda_w}\bigg\}V \cr
&= \big\{1-\frac{M_0}{2}\big\}\lambda_w X + \frac{M_0\lambda_w}{2}\bigg\{X + \frac{2\alpha}{M_0\lambda_w}C(X,V) + \frac{\alpha^2}{M_0^2\lambda_w^2}V\bigg\} + \bigg\{1- \frac{\alpha^2}{2M_0\lambda_w}\bigg\}V \cr
&=\lambda_w X(t) + \frac{2\lambda^2_x}{(\lambda_x+\lambda_w)\lambda_v}C(X,V)(t) +V(t)\cr
&= \mathcal{E}_1(t).
\end{split}
\end{align}
Therefore, combining this with the result of Claim 1, we get,
\begin{align}\label{now}
X(t)\leq \frac{\mathcal{E}_1(0)}{(1-\frac{M_0}{2})\lambda_w }e^{-\kappa t}.
\end{align}
Now, contrary to the claim, suppose  that there are $i_0$ and $j_0$ such that
\begin{align}\label{contra}
|\hat{x}_{i_0}(T_0)-\hat{x}_{j_0}(T_0)|=2\max_{i,j}|\hat{x}_i(0)-\hat{x}_j(0)|
\end{align}
for some $T_{0}>0$. Then, for any $0\leq t\leq T_0$, we have from (\ref{now})
\begin{align*}
|\hat{x}_{i_0}(t)-\hat{x}_{j_0}(t)|^2 &\leq   2|\hat{x}_{i_0}(t)|^2 + 2|\hat{x}_{j_0}(t)|^2\cr
&\leq 2X(t)\cr
&\leq \frac{\mathcal{E}_1(0)}{(1-\frac{M_0}{2})\lambda_w }e^{-\kappa t}.
\end{align*}
Therefore, applying the result of Step I, we deduce for $0\leq t\leq T_0$
\begin{align*}
|\hat{x}_{i_0}(t)-\hat{x}_{j_0}(t)|^2 &\leq  \frac{2\mathcal{E}_1(0)}{(1-\frac{M_0}{2})\lambda_w }< 3\max_{i,j}|\hat{x}_i(0)-\hat{x}_j(0)|^2.
\end{align*}
The last inequality is from (\ref{assumption2}):	
\[
\mathcal{E}_1(0)<\frac{3}{2}\left(1-\frac{M_0}{2}\right)\lambda_w\max_{i,j}|\hat{x}_i(0)-\hat{x}_j(0)|^2.
\]
In conclusion, we have
\begin{align*}
|\hat{x}_{i_0}(t)-\hat{x}_{j_0}(t)|&<\sqrt{3}\max_{i,j}|\hat{x}_i(0)-\hat{x}_j(0)|
\end{align*}
for all $0\leq t\leq T_0$, which  is contradictory to (\ref{contra}). Therefore, $T=\infty$. This completes the proof.	
\end{proof}		

\subsection{Proof of Corollary \ref{cor3.5}:}
By definition of $X$, we have
\[
|\hat{x}_i(t)-\hat{x}_j(t)|^2 \leq 2X(t).
\]
Then (\ref{now}) gives the desired result. The proof for $V$ is similar.\newline
\subsection{Proof of Remark \ref{rmk1}}
\noindent(1) It was shown in (\ref{positive}) that $\mathcal{E}_1(t)$ is positive under our assumptions in Theorem \ref{main theorem 1}.\newline
\noindent(2) We only consider the case where there are at least two players in the market, that is $N \geq 2$. For this, we define $\eta$ by
	\begin{align*}
		\eta \equiv \frac{1}{2}\sqrt{\left(\frac{1}{2}\left(\frac{\lambda_v}{\lambda_x}\right)^2\lambda_w\right)^{2/\gamma}-1}>0,
	\end{align*}
	where the positivity of $\eta$ comes from (\ref{parameter assumption}). With this $\eta$, we choose $x_1(0)$ and $x_2(0)$ as
	\begin{align*}
		x_1(0)= \frac{1}{3} \eta \quad \text{and}	\quad x_2(0)= -\frac{1}{3} \eta,
	\end{align*}
	and set all the remaining $x_i(0)$ and $v_j(0)$ to be zero. Since $x_c(0)=0$ and $v_c(0)=0$, we can say $x_i(0)=\hat{x}_i(0)$ and $v_i(0)=\hat{v}_i(0)$. Then, we have
	\begin{align*}
		\max_{i,j}|\hat{x}_i(0)-\hat{x}_j(0)| = \frac{2}{3} \eta <\eta.
	\end{align*}
	Moreover, $\hat{v}(0)=0$ implies $C(X,V)(0)=V(0)=0$ and $\hat{x}_1(0)+\hat{x}_2(0)=0 $ does $|\hat{x}_1(0)| = |\hat{x}_2(0)|$. Therefore,
	\begin{align*}
		\mathcal{E}_1(0)&= \lambda_w \big(|\hat{x}_1(0)|^2 + |\hat{x}_2(0)|^2 \big)  \cr
		&= 2\lambda_w |\hat{x}_1(0)|^2  \cr
		&= \frac{2}{4}\lambda_w |2\hat{x}_1(0)|^2  \cr
		&<\frac{3}{2}\Big(1-\frac{M_0}{2}\Big)\lambda_w\max_{i,j}|\hat{x}_i(0)-\hat{x}_j(0)|^2.
	\end{align*}
	In the last line, we used
	\begin{align*}
		0 < M_0 \leq 1
	\end{align*}
	and
	\begin{align*}
		|2\hat{x}_1(0)|^2 &=\max_{i,j}|\hat{x}_i(0)-\hat{x}_j(0)|^2.
	\end{align*}

%
%
%
%
%
%
\section{Proof of Theorem \ref{main theorem 2}: Herding without explicit decay rate}
We start with establishing technical lemmas.
\begin{lemma}
For any $\gamma > 0$, we have
\begin{align}\label{new energy}
\frac{dS_{\gamma}(t)}{dt}= \beta_\gamma C_{\phi}(X,V)(t),
\end{align}
where $\beta_{\gamma} > 0$ is given by
\begin{align*}
\beta_\gamma=
\begin{cases}		
2-\gamma, & (0 <\gamma < 2)\\
2, & (\gamma = 2)\\
\gamma-2. & (\gamma > 2)
\end{cases}
\end{align*}

\end{lemma}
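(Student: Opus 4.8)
The plan is to establish \eqref{new energy} by direct differentiation of $S_\gamma$, treating the three ranges of $\gamma$ in parallel; the whole point is that the piecewise definitions of $S_\gamma$ and $\beta_\gamma$ have been engineered so that all three branches collapse onto the same expression. The only dynamical input is the first equation of \eqref{centralized herding model}, namely $\dot{\hat{x}}_i=\hat{v}_i$, which gives for each ordered pair $(i,j)$
\[
\frac{d}{dt}|\hat{x}_i-\hat{x}_j|^2 = 2(\hat{x}_i-\hat{x}_j)\cdot(\hat{v}_i-\hat{v}_j).
\]
Writing $\rho_{ij}(t):=|\hat{x}_i(t)-\hat{x}_j(t)|^2$ (and reading the exponent $\tfrac{r-2}{2}$ in the definition of $S_\gamma$ as $\tfrac{\gamma-2}{2}$), every summand of $S_\gamma$ is a smooth function of $\rho_{ij}$ alone, so the time derivative reduces to a one-variable chain rule applied term by term.

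Carrying out that chain rule in each regime, I would compute
\[
\frac{d}{dt}(1+\rho_{ij})^{\frac{2-\gamma}{2}}=(2-\gamma)(1+\rho_{ij})^{-\frac{\gamma}{2}}(\hat{x}_i-\hat{x}_j)\cdot(\hat{v}_i-\hat{v}_j)
\]
for $0<\gamma<2$, while $\frac{d}{dt}\log(1+\rho_{ij})=2(1+\rho_{ij})^{-1}(\hat{x}_i-\hat{x}_j)\cdot(\hat{v}_i-\hat{v}_j)$ at $\gamma=2$, and $\frac{d}{dt}\big(-(1+\rho_{ij})^{-\frac{\gamma-2}{2}}\big)=(\gamma-2)(1+\rho_{ij})^{-\frac{\gamma}{2}}(\hat{x}_i-\hat{x}_j)\cdot(\hat{v}_i-\hat{v}_j)$ for $\gamma>2$. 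The key observation is that in all three cases the surviving power of $(1+\rho_{ij})$ is exactly $-\gamma/2$, which by the definition of $\phi$ is precisely $\hat{\phi}_{ij}$, and the scalar prefactor is exactly $\beta_\gamma$ (equal to $2-\gamma$, $2$, and $\gamma-2$ respectively). Summing over $1\le i,j\le N$ and dividing by $N$ then yields
\[
\frac{dS_\gamma}{dt}=\beta_\gamma\cdot\frac{1}{N}\sum_{i=1}^N\sum_{j=1}^N\hat{\phi}_{ij}(\hat{x}_i-\hat{x}_j)\cdot(\hat{v}_i-\hat{v}_j)=\beta_\gamma\,C_\phi(X,V)(t),
\]
which is \eqref{new energy}. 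The diagonal terms $i=j$ cause no trouble at any stage, since $\rho_{ii}\equiv0$ makes each diagonal summand vanish and $\hat{v}_i-\hat{v}_i=0$ annihilates its derivative, consistent with the vanishing diagonal of $C_\phi$.

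There is no genuine analytic obstacle here; the computation is elementary and the solution is assumed smooth. The one point that deserves care — and is really the entire content of the lemma — is confirming that the three-way case split has been reverse-engineered correctly: one must check that differentiating each branch lands on the single common exponent $-\gamma/2$ that matches $\hat{\phi}_{ij}$, and that the emerging constants are exactly $\beta_\gamma$. The logarithmic branch at $\gamma=2$ is precisely the borderline case in which the power-law antiderivative degenerates, and verifying that it produces the same $\hat{\phi}_{ij}(\hat{x}_i-\hat{x}_j)\cdot(\hat{v}_i-\hat{v}_j)$ structure with prefactor $2$ is the subtle bookkeeping step. Once this alignment is confirmed, the three identities merge into the single stated formula.
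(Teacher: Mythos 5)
Your proposal is correct and follows essentially the same route as the paper: direct term-by-term differentiation of each branch of $S_\gamma$ via the chain rule and $\dot{\hat{x}}_i=\hat{v}_i$, observing that in every case the surviving power $(1+|\hat{x}_i-\hat{x}_j|^2)^{-\gamma/2}$ is exactly $\hat{\phi}_{ij}$ and the prefactor is $\beta_\gamma$. You also correctly read the exponent ``$\tfrac{r-2}{2}$'' in the paper's definition of $S_\gamma$ as the typo it is, standing for $\tfrac{\gamma-2}{2}$.
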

\begin{proof}
When $ \gamma \neq 2$, we have
\begin{align*}
\frac{d}{dt}\sum_{i=1}^{N}\sum_{j=1}^N \frac{1}{(1+|\hat{x}_i-\hat{x}_j|^2)^{\frac{r-2}{2}} }
&=(2-\gamma)\sum_{i=1}^{N}\sum_{j=1}^N\frac{(\hat{x}_i-\hat{x}_j)(\hat{v}_i-\hat{v}_j)}{\big(1+|\hat{x}_i-\hat{x}_j|^2\big)^\frac{\gamma}{2}}\cr
&=(2-\gamma)\sum_{i=1}^{N}\sum_{j=1}^N\hat{\phi}_{ij}(\hat{x}_i-\hat{x}_j)(\hat{v}_i-\hat{v}_j).
\end{align*}
For $ \gamma = 2$, we similarly compute
\begin{align*}
\frac{d}{dt}\sum_{i=1}^{N}\sum_{j=1}^N \log{(1+|\hat{x}_i-\hat{x}_j|^2)}
&=2\sum_{i=1}^{N}\sum_{j=1}^N\frac{(\hat{x}_i-\hat{x}_j)(\hat{v}_i-\hat{v}_j)}{(1+|\hat{x}_i-\hat{x}_j|^2)}\cr
&=2\sum_{i=1}^{N}\sum_{j=1}^N\hat{\phi}_{ij}(\hat{x}_i-\hat{x}_j)(\hat{v}_i-\hat{v}_j).
\end{align*}
\end{proof}

\begin{lemma}\label{decay} \label{E_2'''} For $\lambda_x, \lambda_v, \lambda_w>0$, we have	
\begin{enumerate}
\item First derivative:
\begin{align*}
\mathcal{E}_2'(t)=  -\lambda_v V_{\phi}(t).
\end{align*}
\item Second derivative:
\begin{align*}
\mathcal{E}_2''(t)= -\frac{\lambda_v}{N} \sum_{i=1}^{N}\sum_{j=1}^{N}  \bigg( \hat{\phi}'_{ij}|\hat{v}_i-\hat{v}_j|^2 + 2\hat{\phi}_{ij}\left(\hat{v}_i-\hat{v}_j\right) \cdot \left(\hat{v}'_i-\hat{v}'_j\right) \bigg).
\end{align*}
\item Third derivative:
\begin{align*}
 \mathcal{E}_2'''(t)
&= -\frac{\lambda_v}{N} \sum_{i=1}^{N}\sum_{j=1}^{N} \hat{\phi}''_{ij}|\hat{v}_i-\hat{v}_j|^2 \cr
&- \frac{\lambda_v}{N} \sum_{i=1}^{N}\sum_{j=1}^{N}  \bigg(  4\hat{\phi}'_{ij}\left(\hat{v}_i-\hat{v}_j\right) \cdot \left(\hat{v}'_i-\hat{v}'_j\right) \bigg)\cr
&- \frac{\lambda_v}{N} \sum_{i=1}^{N}\sum_{j=1}^{N}  \bigg(  2\hat{\phi}_{ij}\left(\hat{v}'_i-\hat{v}'_j\right) \cdot \left(\hat{v}'_i-\hat{v}'_j\right) \bigg)\cr
&- \frac{\lambda_v}{N} \sum_{i=1}^{N}\sum_{j=1}^{N}  \bigg(  2\hat{\phi}_{ij}\left(\hat{v}_i-\hat{v}_j\right) \cdot \left(\hat{v}''_i-\hat{v}''_j\right) \bigg).
\end{align*}	
\end{enumerate}
\end{lemma}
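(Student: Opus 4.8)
The plan is to obtain the three formulas by successive differentiation, letting the earlier lemmas do all the work for part (1) and then carrying out straightforward product-rule bookkeeping for parts (2) and (3). For the first derivative I would differentiate $\mathcal{E}_2 = \lambda_w X + V + \lambda_x \beta_\gamma^{-1} S_\gamma$ term by term, inserting $X' = 2C(X,V)$ and $V' = -2\lambda_w C(X,V) - \lambda_x C_\phi(X,V) - \lambda_v V_\phi$ from Lemma \ref{X,V}, together with $S_\gamma' = \beta_\gamma C_\phi(X,V)$ from \eqref{new energy}. The key point is a double cancellation: the term $2\lambda_w C(X,V)$ produced by $\lambda_w X'$ cancels the $-2\lambda_w C(X,V)$ in $V'$, while $\lambda_x \beta_\gamma^{-1} S_\gamma' = \lambda_x C_\phi(X,V)$ exactly cancels the $-\lambda_x C_\phi(X,V)$ in $V'$, so that only $-\lambda_v V_\phi$ survives. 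I would stress that this cancellation is by design: the weight $\lambda_x \beta_\gamma^{-1}$ in front of $S_\gamma$ was chosen precisely so that all covariance terms vanish, leaving the manifestly non-positive $-\lambda_v V_\phi$; this is the entire source of the monotonicity of $\mathcal{E}_2$ and hence of the LaSalle-type argument to come.

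For parts (2) and (3), part (1) has reduced $\mathcal{E}_2'$ to the finite sum $-\frac{\lambda_v}{N}\sum_{i,j}\hat{\phi}_{ij}|\hat{v}_i-\hat{v}_j|^2$, so I would simply differentiate under the sum by the product rule, keeping $\hat{\phi}_{ij}'$, $\hat{\phi}_{ij}''$ and $\hat{v}_i''$ as symbolic total time derivatives. One differentiation gives (2), with each summand yielding $\hat{\phi}_{ij}'|\hat{v}_i-\hat{v}_j|^2 + 2\hat{\phi}_{ij}(\hat{v}_i-\hat{v}_j)\cdot(\hat{v}_i'-\hat{v}_j')$. Differentiating once more gives (3); the only bookkeeping is that the two summands of (2) each contribute a mixed term $\hat{\phi}_{ij}'(\hat{v}_i-\hat{v}_j)\cdot(\hat{v}_i'-\hat{v}_j')$, which combine into the coefficient $4$, while differentiating the velocity factors produces the two new terms $2\hat{\phi}_{ij}|\hat{v}_i'-\hat{v}_j'|^2$ and $2\hat{\phi}_{ij}(\hat{v}_i-\hat{v}_j)\cdot(\hat{v}_i''-\hat{v}_j'')$.

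I do not expect a genuine obstacle here. The derivatives $\hat{\phi}_{ij}'$, $\hat{\phi}_{ij}''$ and $\hat{v}_i''$ are left unexpanded and exist because the right-hand side of the centralized system \eqref{centralized herding model} is a smooth (indeed rational) function of $(\hat{x},\hat{v})$, so the solution is $C^\infty$ on its interval of existence and the differentiations under the finite sums are all legitimate. The only conceptually important step is the cancellation in part (1); parts (2) and (3) are recorded purely for later use, where control of $\mathcal{E}_2''$ and $\mathcal{E}_2'''$ through these formulas will allow one to upgrade the integrability of $\mathcal{E}_2'$ into $V_\phi(t)\to 0$ in the invariance-principle argument.
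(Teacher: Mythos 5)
Your proposal is correct and follows essentially the same route as the paper: part (1) is obtained by combining Lemma \ref{X,V} with the identity $S_\gamma' = \beta_\gamma C_\phi(X,V)$ so that the covariance terms cancel, and parts (2) and (3) follow by directly differentiating $-\lambda_v V_\phi$ under the sum. Your product-rule bookkeeping (including the coefficient $4$ from the two mixed terms) and the smoothness remark simply make explicit what the paper leaves as "follows directly."
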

\begin{proof} (1) We differentiate $\mathcal{E}_2(t)$ and use Lemma \ref{new energy} to get
\begin{align*}
\frac{d}{dt}\bigg\{ \lambda_w X  +V + \frac{\lambda_x}{\beta_\gamma}S_\gamma\bigg\} &= 2 \lambda_w C(X,V) + \bigg\{-2\lambda_w C(X,V)
- \lambda_x C_{\phi}(X,V) - \lambda_v V_{\phi}\bigg\} \\
&+\frac{\lambda_x}{\beta_\gamma} \beta_\gamma C_{\phi}(X,V)    \cr
&=  -\lambda_v   V_{\phi}.
\end{align*}	
Identities in (2) and (3) follow directly from differentiating $-\lambda_vV_{\phi}$ in the r.h.s.	
\end{proof}

\begin{lemma}\label{equilibrium}
Fix  $t>0$. Assume  $\{\hat{v}_i(t)\}_{i=1,...,N} $ are all identical, but $\{\hat{x}_i(t)\}_{i=1,...,N} $ are not. That is,
\[
\sum_{i,j}|\hat{v}_i(t)-\hat{v}_j(t)|=0,\quad \sum_{i,j}|\hat{x}_i(t)-\hat{x}_j(t)|\neq0.
\]
Then the first, second derivatives of $\mathcal{E}_2$ vanish at $t$:
\[
\mathcal{E}'_2(t)=\mathcal{E}''_2(t)=0,
\]
and the third derivative of $\mathcal{E}_2$ is strictly negative at $t$:
$$
\mathcal{E}'''_2(t)<0.
$$
\end{lemma}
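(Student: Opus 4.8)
The plan is to evaluate the three derivative formulas from Lemma~\ref{E_2'''} at the fixed time $t$ under the standing hypothesis that the velocities coincide while the positions do not, and to show that the only surviving contribution, appearing in the third derivative, is strictly negative. The vanishing of $\mathcal{E}_2'(t)$ and $\mathcal{E}_2''(t)$ will be immediate bookkeeping, and the whole content will lie in the sign of $\mathcal{E}_2'''(t)$.

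First I would record that $\sum_{i,j}|\hat{v}_i(t)-\hat{v}_j(t)|=0$ forces $\hat{v}_i(t)-\hat{v}_j(t)=0$ for every pair. Substituting into Lemma~\ref{E_2'''}(1) gives $V_\phi(t)=0$, hence $\mathcal{E}_2'(t)=0$. In the formula for $\mathcal{E}_2''(t)$ each summand carries a factor $|\hat{v}_i-\hat{v}_j|^2$ or $(\hat{v}_i-\hat{v}_j)$, so $\mathcal{E}_2''(t)=0$. The same inspection of the four sums comprising $\mathcal{E}_2'''(t)$ annihilates the first (factor $|\hat{v}_i-\hat{v}_j|^2$), the second and the fourth (factor $\hat{v}_i-\hat{v}_j$), leaving only the third:
\[
\mathcal{E}_2'''(t)=-\frac{2\lambda_v}{N}\sum_{i=1}^{N}\sum_{j=1}^{N}\hat{\phi}_{ij}|\hat{v}'_i(t)-\hat{v}'_j(t)|^2.
\]
Because $\lambda_v>0$ and $\hat{\phi}_{ij}>0$ everywhere, strict negativity reduces to showing that the accelerations $\{\hat{v}'_i(t)\}_i$ are not all identical, i.e.\ $\sum_{i,j}|\hat{v}'_i-\hat{v}'_j|^2>0$.

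This last point is the crux. Using the centralized model~(\ref{centralized herding model}) together with the vanishing of the velocity differences, the alignment term drops out and the acceleration reduces to
\[
\hat{v}'_i(t)=\frac{\lambda_x}{N}\sum_{j=1}^{N}\hat{\phi}_{ij}(\hat{x}_j-\hat{x}_i)-\lambda_w\hat{x}_i.
\]
I would argue by contradiction: suppose all $\hat{v}'_i$ equal a common vector $c$. Taking the inner product with $\hat{x}_i$ and summing, the left side is $\sum_i\hat{v}'_i\cdot\hat{x}_i=c\cdot\sum_i\hat{x}_i=N\,c\cdot\hat{x}_c=0$, since $\hat{x}_c=0$ in the centralized system. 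On the other hand, symmetrizing the interaction term as in~(\ref{symm}) rewrites the right side as
\[
\sum_{i=1}^{N}\hat{v}'_i\cdot\hat{x}_i=-\frac{\lambda_x}{2N}\sum_{i=1}^{N}\sum_{j=1}^{N}\hat{\phi}_{ij}|\hat{x}_i-\hat{x}_j|^2-\lambda_w X(t),
\]
which is strictly negative because the positions are not all equal (so $X(t)>0$) while $\lambda_x,\lambda_w>0$. This contradiction shows the accelerations cannot all coincide, so the surviving sum is strictly positive and $\mathcal{E}_2'''(t)<0$.

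The only genuine obstacle is this final strict-negativity step; everything else is direct substitution. Its decisive ingredients are the cancellation of the velocity-alignment term (forced by equal velocities), the symmetrization identity turning the interaction term into a nonpositive quadratic form, and above all the centralization $\hat{x}_c=0$, which forces the common acceleration $c$ to be zero and thereby clashes with the strictly negative inner-product sum. I would also take care to invoke $\hat{\phi}_{ij}>0$ pointwise, which is exactly what upgrades the conclusion from ``$\le 0$'' to ``$<0$'' once positivity of the weighted sum is established.
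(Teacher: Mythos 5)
Your proposal is correct, and the reduction steps coincide with the paper's: both kill $\mathcal{E}_2'(t)$ and $\mathcal{E}_2''(t)$ and all but one sum in $\mathcal{E}_2'''(t)$ by the factors $\hat{v}_i-\hat{v}_j$ (via Lemma \ref{decay}), so that everything hinges on showing the accelerations $\hat{v}'_i(t)$ are not all identical. Where you genuinely diverge is in that crux step. The paper argues componentwise and constructively: since the positions are not all equal, there is a coordinate $d$ and extremal indices $i_0,j_0$ with $\hat{x}^d_{i_0}=\min_i \hat{x}^d_i<\max_i \hat{x}^d_i=\hat{x}^d_{j_0}$, and then in the expression for $\hat{v}'^d_{i_0}-\hat{v}'^d_{j_0}$ the $\lambda_w$-term is strictly positive while the two interaction sums are nonnegative by extremality, exhibiting an explicit pair of distinct accelerations. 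You instead argue by contradiction with a global identity: if $\hat{v}'_i\equiv c$, then $\sum_i \hat{v}'_i\cdot\hat{x}_i = Nc\cdot\hat{x}_c=0$ by centralization, whereas symmetrization (as in the proof of Lemma \ref{C(X,V)}) gives $\sum_i \hat{v}'_i\cdot\hat{x}_i = -\tfrac{\lambda_x}{2N}\sum_{i,j}\hat{\phi}_{ij}|\hat{x}_i-\hat{x}_j|^2-\lambda_w X(t)<0$, since the positions are not all equal and $\hat{\phi}_{ij}>0$. Both arguments are sound. Yours is shorter and coordinate-free, avoiding the extremal-index bookkeeping; but it leans essentially on $\hat{x}_c=0$, so it is tied to the centralized system (\ref{centralized herding model}), whereas the paper's proof never uses centralization and therefore proves the lemma for arbitrary configurations with equal velocities and unequal positions. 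In the paper's actual application (the LaSalle argument, where $\hat{x}_c=\hat{v}_c=0$ is preserved by the flow), your extra hypothesis is harmless, so your proof fully serves the intended purpose.
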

\begin{proof}
$\mathcal{E}'_2(t)=\mathcal{E}''_2(t)=0$ is clear from Lemma  $\ref{E_2'''}$ (1) and (2).
For $\mathcal{E}'''_2(t)$, recall from Lemma $\ref{E_2'''}$ (3) that, when $\{\hat{v}_i(t)\}_{i=1,...,N} $ are all identical, all but the third term in the r.h.s vanishes, yielding
\begin{align}\label{turnback}
\mathcal{E}_2'''(t)
&= - \frac{2\lambda_v}{N} \sum_{i=1}^{N}\sum_{j=1}^{N} \hat{\phi}_{ij}  \big|\hat{v}'_i-\hat{v}'_j\big|^2.
\end{align}
Therefore, our goal is reduced to showing that the right hand side is not zero.
For this, we note from (\ref{centralized herding model}) that
\begin{align*}
\hat{v}_i^{\prime}-\hat{v}_j^{\prime}&=
-\lambda_w\left(\hat{x}_i-\hat{x}_j\right)+\frac{\lambda_x}{N}\sum_{k=1}^N\hat{\phi}_{ik}(\hat{x}_k-\hat{x}_i)+\frac{\lambda_v}{N}\sum_{k=1}^N\hat{\phi}_{ik}(\hat{v}_k-\hat{v}_i)\cr
&-\frac{\lambda_x}{N}\sum_{k=1}^N\hat{\phi}_{jk}(\hat{x}_k-\hat{x}_j)-\frac{\lambda_v}{N}\sum_{k=1}^N\hat{\phi}_{jk}(\hat{v}_k-\hat{v}_j),
\end{align*}
which, under our assumption of identical favorability, reduces to
\begin{align*}
\hat{v}'_{i}-\hat{v}'_{j}&=
-\lambda_w\left(\hat{x}_{i}-\hat{x}_{j}\right)+\frac{\lambda_x}{N}\sum_{k=1}^N\hat{\phi}_{ik}(\hat{x}_k-\hat{x}_{i})-\frac{\lambda_x}{N}\sum_{k=1}^N\hat{\phi}_{jk}(\hat{x}_k-\hat{x}_{j}).
\end{align*}
Now, for a vector $\hat{x}_i$, let $\hat{x}_i^k$ denote the $k$th element of $\hat{x}_i$. That is, $\hat{x}_i=(\hat{x}_i^1,\hat{x}_i^2,\dots,\hat{x}_i^M)$. Then from our assumption, we can find $i_0 \ne j_0$ and $d$ such that $$\min_{i} \hat{x}_i^d=\hat{x}_{i_0}^d < \hat{x}_{j_0}^d = \max_{i} \hat{x}_i^d \quad \text{and}\quad
\hat{x}^{d}_{i_0} \leq \cdots \leq \hat{x}^{d}_{j_0}.
$$
For such choice of $i_0, j_0$ and $d$, we have
\begin{align*}
\hat{v}'^{d}_{i_0}-\hat{v}'^{d}_{j_0}&=
-\lambda_w\left(\hat{x}^{d}_{i_0}-\hat{x}^{d}_{j_0}\right)+\frac{\lambda_x}{N}\sum_{k=1}^N\hat{\phi}_{i_0k}(\hat{x}^{d}_k-\hat{x}^{d}_{i_0})-\frac{\lambda_x}{N}
\sum_{k=1}^N\hat{\phi}_{j_0k}(\hat{x}^{d}_k-\hat{x}^{d}_{j_0}) >0.
\end{align*}
We now turn back to (\ref{turnback}) with this observation to obtain the desired result:
\begin{align*}
\mathcal{E}_2'''(t)
&= - \frac{2\lambda_v}{N} \sum_{i=1}^{N}\sum_{j=1}^{N} \hat{\phi}_{ij}  \big|\hat{v}'_i-\hat{v}'_j\big|^2\cr
&\leq - \frac{2\lambda_v}{N} \sum_{i=1}^{N}\sum_{j=1}^{N} \hat{\phi}_{ij}  \big|\hat{v}'^{d}_i-\hat{v}'^{d}_j\big|^2\cr
&\leq - \frac{2\lambda_v}{N}  \hat{\phi}_{i_0j_0}  \big|\hat{v}'^{d}_{i_0}-\hat{v}'^{d}_{j_0}\big|^2\cr
&<0.
\end{align*}
\end{proof}	
\subsection{Proof of Theorem \ref{main theorem 2}:}
We first recall the following invariance principle by LaSalle \cite{L}:
\begin{definition}\label{LaSalle's therem}{\bf\cite{L}}
A set $\mathcal{M}$ is said to be invariant if each solution starting
in $\mathcal{M}$ remains in $\mathcal{M}$ for all $t$. That is,
$$x(0)\in M \Rightarrow x(t)\in \mathcal{M} \text{ for all }t.$$
\end{definition}

\begin{theorem}\label{LaSalle's therem}{\bf\cite{L}}
Consider the system of differential equations
\begin{equation}\label{prb}
\dot{x}=F(x)
\end{equation}
where $x(t)=\big(x_1(t),\dots,x_N(t)\big)$ and $F$ is a vector field.
Let $L(x)$ be a scalar function with continuous first partials for all $x$. Assume that
\begin{align*}
&i) \ L(x(t)) > 0 \text{ for all } x \neq 0,\\
&ii) \ \dot{L}(x(t)) \leq 0 \text{ for all } x.
\end{align*}	
Let $E$ be the set of all points where $ \dot{L}(x) = 0$, and let $\mathcal{M}$ be the largest invariant set contained in E. Then every solution
of (\ref{prb}) bounded for all $t \geq 0$ approaches $\mathcal{M}$ as $t \rightarrow \infty$.	
\end{theorem}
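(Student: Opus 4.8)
The plan is to treat this as a statement about the $\omega$-limit set of a bounded trajectory, exploiting that hypotheses (i)--(ii) make $L$ a Lyapunov function that is monotone along solutions. Throughout I assume $F$ is regular enough (say, locally Lipschitz) that \eqref{prb} generates a well-defined flow, i.e.\ solutions exist, are unique, and depend continuously on their initial data; this is implicit in the statement and is precisely what allows one to speak of invariant sets. Fix a solution $x(t)$ that is bounded for all $t\geq 0$, let $\gamma^+=\{x(t):t\geq 0\}$ be its forward orbit, and note that $\overline{\gamma^+}$ is compact by boundedness.

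First I would show that $L$ settles to a limit along the trajectory. Since $\dot L(x(t))=\nabla L(x(t))\cdot F(x(t))\leq 0$ by (ii), the map $t\mapsto L(x(t))$ is nonincreasing; since $L>0$ by (i) and $L$ is continuous on the compact set $\overline{\gamma^+}$, it is bounded below, so there is a constant $c\geq 0$ with
\[
\lim_{t\to\infty}L(x(t))=c.
\]
Next I would introduce the $\omega$-limit set
\[
\Omega=\{\,p : x(t_n)\to p \text{ for some } t_n\to\infty\,\}
\]
and record its standard properties for a precompact forward orbit: $\Omega$ is nonempty, compact, and invariant, and moreover $\operatorname{dist}(x(t),\Omega)\to 0$ as $t\to\infty$. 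Nonemptiness and compactness follow from precompactness; invariance uses uniqueness and continuity of the flow; the convergence $\operatorname{dist}(x(t),\Omega)\to 0$ is proved by contradiction, since otherwise a subsequence would remain a fixed distance from $\Omega$ yet, by compactness, accumulate at a point that by definition lies in $\Omega$.

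Then I would locate $\Omega$ inside $\mathcal{M}$. For any $p\in\Omega$, picking $t_n\to\infty$ with $x(t_n)\to p$ and using continuity of $L$ gives $L(p)=\lim_n L(x(t_n))=c$, so $L\equiv c$ on $\Omega$. Because $\Omega$ is invariant, the solution through any $p\in\Omega$ stays in $\Omega$, along which $L$ is constant, so its time derivative vanishes; hence $\dot L(p)=0$ and $p\in E$. Thus $\Omega\subseteq E$, and since $\Omega$ is an invariant subset of $E$ while $\mathcal{M}$ is the largest such set, $\Omega\subseteq\mathcal{M}$. Combining this with $\operatorname{dist}(x(t),\Omega)\to 0$ yields $\operatorname{dist}(x(t),\mathcal{M})\to 0$, which is exactly the claimed conclusion.

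The main obstacle is not the Lyapunov bookkeeping but the dynamical-systems infrastructure behind the $\omega$-limit set: I must justify that $\Omega$ is invariant and that the orbit actually converges to it. Invariance is the delicate point, since it rests on having a genuine flow --- existence, uniqueness, and continuous dependence for \eqref{prb} --- which in turn needs a regularity hypothesis on $F$ (e.g.\ local Lipschitz) that the bare statement leaves implicit; I would make this assumption explicit at the outset. Everything else (monotone convergence of $L$, constancy of $L$ on $\Omega$, and the chain $\Omega\subseteq E$ together with $\Omega\subseteq\mathcal{M}$) is routine once these topological facts about $\Omega$ are in hand.
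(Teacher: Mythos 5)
Your proof is correct: the monotone convergence of $L$ along the bounded orbit, the nonemptiness, compactness and invariance of the $\omega$-limit set $\Omega$, the constancy of $L$ on $\Omega$ forcing $\Omega\subseteq E$, and maximality of $\mathcal{M}$ giving $\Omega\subseteq\mathcal{M}$ is exactly the classical argument, and you rightly flag that invariance of $\Omega$ needs the well-posedness (local Lipschitz) hypothesis left implicit in the statement. Note that the paper itself offers no proof of this theorem --- it is imported verbatim from LaSalle \cite{L} and used as a black box in the proof of Theorem \ref{main theorem 2} --- and your argument is essentially LaSalle's original one, so there is nothing in the paper to diverge from.
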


We now start the proof of Theorem  \ref{main theorem 2}.  First, recall that the herding functional $\mathcal{E}_2(t)$ is non-negative:
$$
\mathcal{E}_2(t) \geq 0,
$$
and vanishes only when $\hat{x}(t) = \hat{v}(t) = 0$.
We also have
\begin{equation}\label{boundness}
	\mathcal{E}_2'(t) = -\lambda_v V_{\phi}(t) \leq 0.
\end{equation}
The equality holds only when $\{\hat{v}_i\}_{i=1,...,N}$ are all identical. (\ref{boundness}) also implies the boundness of the solutions.
Therefore, $\mathcal{E}_2$ satisfies the conditions of Theorem \ref{LaSalle's therem}.
Now, define $E$ to be the null-space of $V_{\phi}(t)$:
$$E:= \big\{(\hat{x}_1,\dots,\hat{x}_N,\hat{v}_1,\dots,\hat{v}_N
) \in \mathbb{R}^{2MN} \ | \ \hat{v}_1=\cdots = \hat{v}_N \big\}.$$
Since we are considering the centralized model, and $\hat{v}_c(t)=0$, $E$ actually is
$$E:= \big\{(\hat{x}_1,\dots,\hat{x}_N,\hat{v}_1,\dots,\hat{v}_N
) \in \mathbb{R}^{2MN} \ | \ \hat{v}_1=\cdots = \hat{v}_N=0 \big\}.$$
 Let
 $\mathcal{M}$ be the largest invariance set in $E$. In view of the above invariance theorem, our goal is to verify that $\mathcal{M}$ is trivial:
\begin{align}\label{MM}
\mathcal{M}= \big\{0\big\} \subseteq \mathbb{R}^{2MN} .
\end{align}
For this, suppose contrarily that there exists  an open interval $I$ and a solution $\{\hat{x}_i(t),\hat{v}_i(t)\}_{i=1}^N$  to (\ref{centralized herding model}) in $\mathcal{M}$ such that
\[
\sum_{i,j}|\hat{x}_i(s)-\hat{x}_j(s)|\neq0 \mbox{ for } s\in I.
\]
Since $\{\hat{x}_i(t),\hat{v}_i(t)\}_{i=1}^N\in E$ by definition, we have
\[
\hat{v}_1(t)=\dots=\hat{v}_N(t)=0 ,
\]
and by Lemma \ref{equilibrium}, the first and second derivative of $\mathcal{E}_2(t)$ vanishes while
the third derivative $\mathcal{E}_2'''(t)$ remains strictly negative on $I$:
\[
\mathcal{E}_2'(t)=\mathcal{E}_2''(t)=0,\quad \mathcal{E}_2'''(t)<0\mbox{ on }I.
\]
Therefore, for any $[t_1,t_2]\subset I$, we have from the Taylor's theorem
\begin{align*}
0=\mathcal{E}_2^{\prime}(t_2)&=\mathcal{E}_2^{\prime}(t_1)+(t_2-t_1)\mathcal{E}_2^{\prime\prime}(t_1)+\int^{t_2}_{t_1}(t_2-s)^2\mathcal{E}_2^{\prime\prime\prime}(s)ds\cr
&=\int^{t_2}_{t_1}(t_2-s)^2\mathcal{E}_2^{\prime\prime\prime}(s)ds\cr
&<0,
\end{align*}
 which is a contradiction. This proves (\ref{MM}). The desired result then follows from Theorem \ref{LaSalle's therem}.

\section{Numerical Simulation}	

 In this section, we present three numerical tests demonstrating the herding behavior in the market.
In Test 1, we numerically  verify Theorem \ref{main theorem 1} and Theorem \ref{main theorem 2} in the case
($M=1$, $N=5$) with different choices of parameters and initial data.
In Test 2, we present trajectories of the numerical solution to ($\ref{centralized herding model}$) for two dimensional problem ($M=2, N=4$) to visualize the herding phenomenon in multi-d case.
In Test 3, we give two dimensional histograms for each variable $x$ and $v$ with $M=2, N=500$ for the simulation of large number of players. We employ a fourth-order Runge-Kutta method for the time evolution with fixed time steps in all simulations.

\subsection{Numerical test 1} We recall three scenarios of the dynamic market signals presented in Section 2 and fix
throughout test 1 as
$$w_1(x,t)=4\cos(4t), \quad w_2(x,t)=x_1(t), \quad w_3(x,t)=\frac{1}{N}\sum_{i=1}^{N}x_i(t).$$
For simplicity, we consider five market players $N=5$ and one asset $M=1$.

In the numerical  Test 1-1 and 1-2 below, the initial data $(x_i(0), v_i(0))$ have been chosen randomly,
and the only difference is the choice of $x_1(0)$:
\[
\mbox{Test 1-1}:~x_1(0)=10,\qquad \mbox{Test 1-2}:~x_1(0)=-10.
\]
The choice $x_1(0)=10$ means that initially market player $x_1$ has viewed the market positively,
and choice $x_1(0)=-10$ means the opposite.
This is to compare the influence of the assessment of the influential player  $x_1(t)$ on the market.

In Test 1-3, we consider the dynamics of solutions corresponding to an initial configuration that doesn't satisfy the conditions of Theorem \ref{main theorem 1}.
We observe that essential features of Theorem \ref{main theorem 1} break down but the result of Theorem \ref{main theorem 2} still holds. It demonstrates that the conditions in Theorem \ref{main theorem 1} are essential.  (See Remark \ref{rmk2}.)
Throughout Test 1-1 to 1-3, we consider numerical solutions to the non-centralized model (\ref{herding model}) to clearly manifest the influence of $(x_c,v_c)$
on the herding dynamics.
\subsubsection{\bf Numerical test 1-1:} {\bf (The case of $x_1(0)=10$)}\label{NT1-1}}
In this test, we choose $\gamma=1.5$, $\lambda_x=0.1$, $\lambda_v=3$ and $\lambda_w=2$, which satisfy the parameter assumption (\ref{parameter assumption}). Initial data is randomly chosen in $[-10,10]\times[-10,10]$ to satisfy (\ref{assumption}) and (\ref{assumption2}) in Theorem \ref{main theorem 1}. We obtain numerical solutions to the original non-centralized herding model (\ref{herding model}) corresponding to each of the three scenarios and plot them in Figure \ref{fig:test11w1},\ref{fig:test11w2} and \ref{fig:test11w3}, respectively. The final time is taken as $T=6$.



\begin{figure}[!t]
	\includegraphics[width=0.45\linewidth]{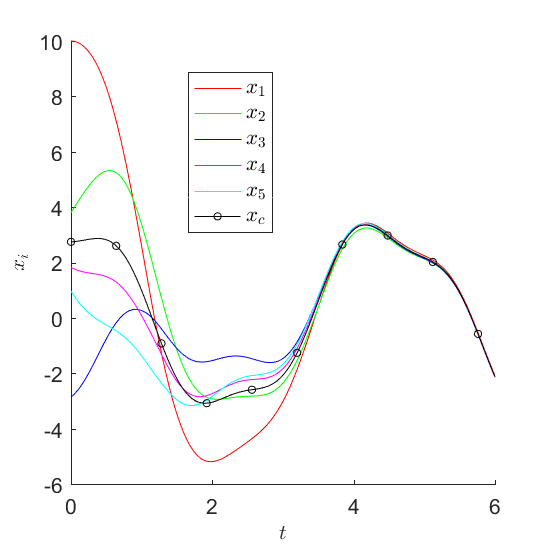}
	\includegraphics[width=0.45\linewidth]{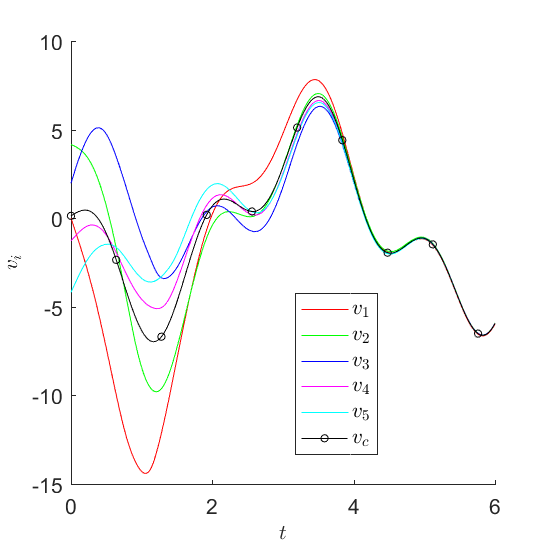}
	\caption{$w_1(x,t)=4\cos(4t)$}
	\label{fig:test11w1}
\end{figure}

\begin{figure}[!h]
	\includegraphics[width=0.45\linewidth]{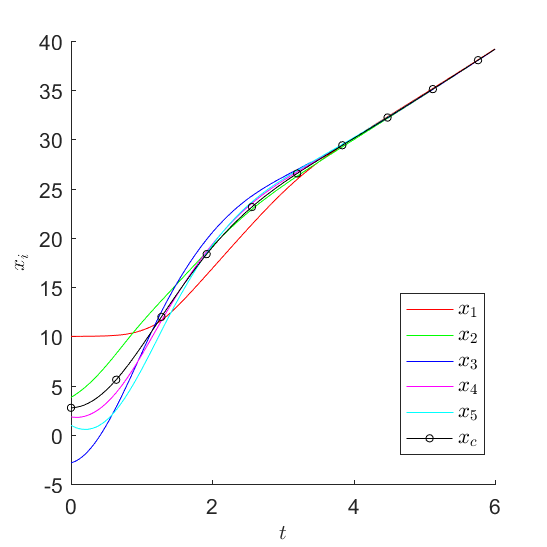}
	\includegraphics[width=0.45\linewidth]{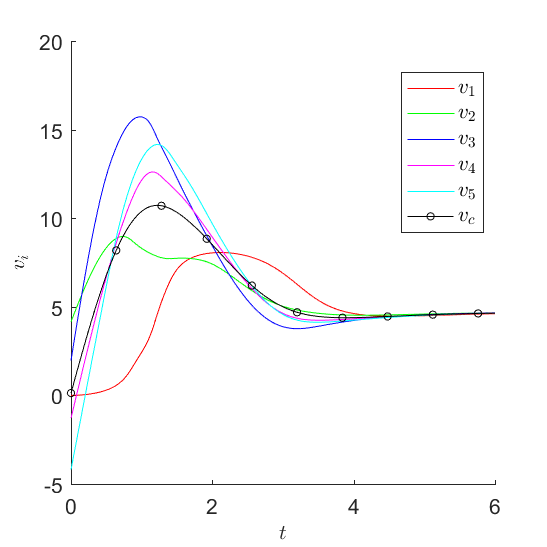}
	\caption{$w_2(x,t)=x_1(t)$}
	\label{fig:test11w2}
\end{figure}

In Figure \ref{fig:test11w1}, we plost $x$ and $v$ in the case of $w_1(x,t)=4\cos(4t)$.
We observe that  $x_i,v_i$ tend to show similar patterns after $t=4$. We also see that the average favorability $v_c$ moves up and down. Since the market signal $w_1$ has  a oscillating pattern,  it is natural to observe such situation in which the market players tend to change their minds frequently.

In Figure \ref{fig:test11w2}, the behaviors of $x$ and $v$ are provided when $w_2(x,t)=x_1$.
We see that the numerical solution to the second scenario shows very different phase compared to
those presented in  Figure \ref{fig:test11w1} even though they start with the same initial random data.
After $t=4$, each $v_i$ seems to converge to a fixed constant about $4$, which leads to linear increase in $x_i$.

In Figure \ref{fig:test11w3}, behaviors of $x,v$ are shown for $w_3(x,t)=\frac{1}{N}\sum_{i=1}^{N}x_i(t)$.
We observe that $v_c$ remains unchanged.
Recalling Lemma \ref{average} implies $\frac{d}{dt}v_c(t)=0$ in this case, it is natural to have such a fixed constant $v_c$.
We can observe that it also shows the collective behaviors after $t=4$.
\begin{figure}[!t]
	\includegraphics[width=0.45\linewidth]{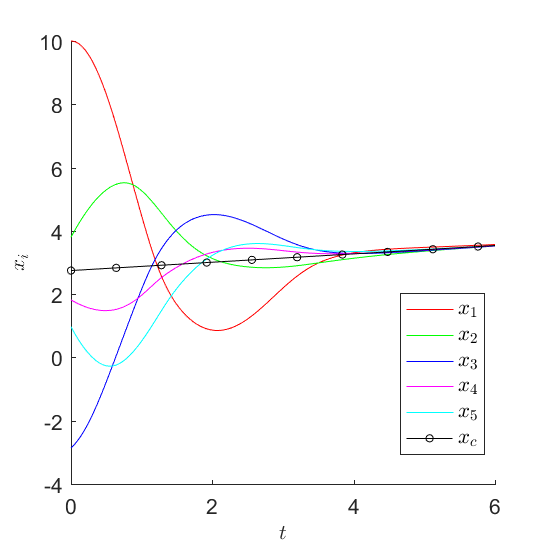}
	\includegraphics[width=0.45\linewidth]{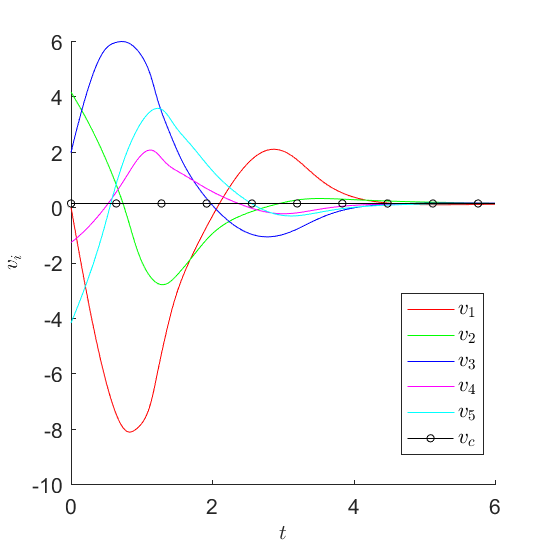}
	\caption{$w_3(x,t)=\frac{1}{N}\sum_{i=1}^{N}x_i(t)$}
	\label{fig:test11w3}
\end{figure}

\begin{figure}[!h]
	\centering
	\begin{subfigure}[b]{0.328\linewidth}
		\includegraphics[width=1\linewidth]{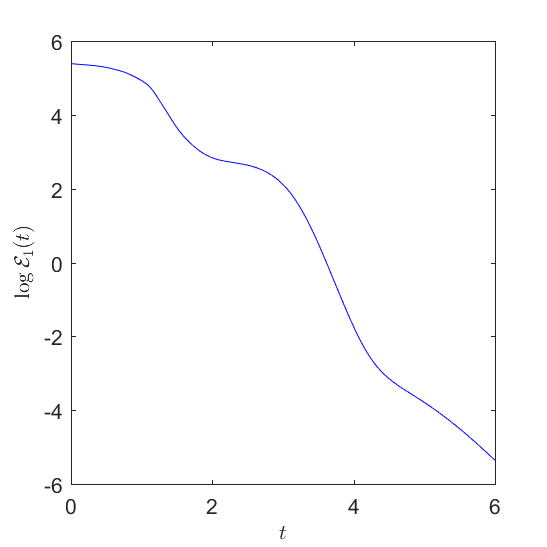}
		\subcaption{Herding energy $\mathcal{E}_1(t)$}
		\label{fig:test11E1}
	\end{subfigure}	
	\begin{subfigure}[b]{0.328\linewidth}
		\includegraphics[width=1\linewidth]{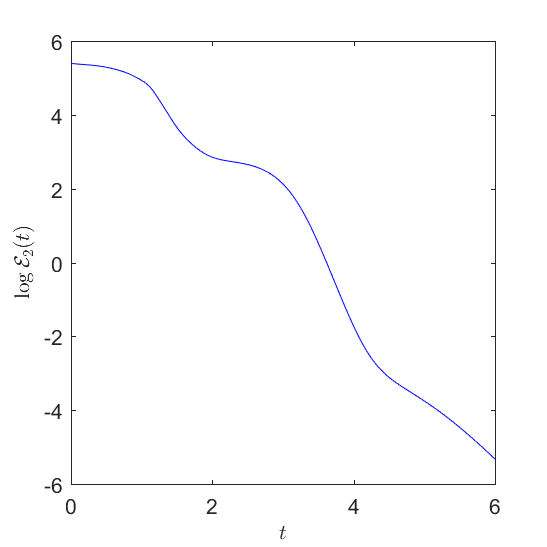}
		\subcaption{Herding energy $\mathcal{E}_2(t)$}
		\label{fig:test11E2}	
	\end{subfigure}
	\begin{subfigure}[b]{0.328\linewidth}
		\includegraphics[width=1\linewidth]{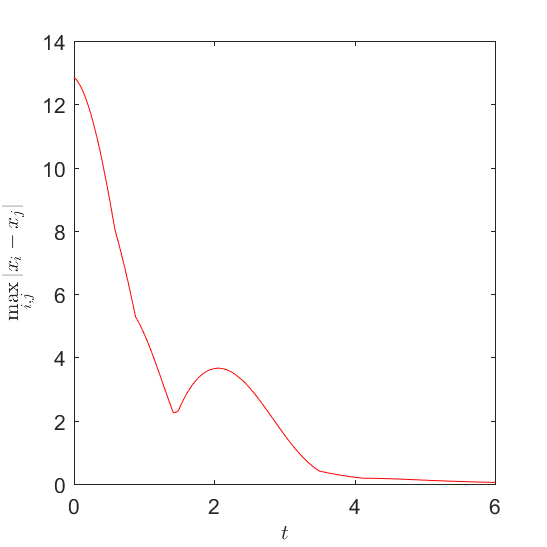}	
		\caption{  $\max_{i,j}|x_i-x_j|$}
		\label{fig:test11boundx}	
	\end{subfigure}
	\caption{Herding energies and bound}	
\end{figure}

\vspace{2mm}

We now check the evolution of two herding energies: $\mathcal{E}_1(t)$ and $\mathcal{E}_2(t)$.
 In Figure \ref{fig:test11E1} and \ref{fig:test11E2}, we plot $\mathcal{E}_1(t)$ and $\mathcal{E}_2(t)$
 in the log sense, respectively. We observe that each herding energy decreases monotonically toward $0$.
In Figure \ref{fig:test11boundx}, it is verified that  $\max_{i,j}|x_i(t)-x_j(t)|$ can not attain the bound $2\max_{i,j}|x_i(0)-x_j(0)|$ for all $t \geq 0$ as was guaranteed in (\ref{2bae}).


\subsubsection{\bf Numerical test 1-2:} \label{NT1-2}{\bf (The case of $\bf{x_1(0)=-10})$}
In this test, we replace $x_1(0)$ as
 $$x_1(0)=10 \rightarrow x_1(0)=-10,$$
 with all the other initial configuration and parameters fixed. Note that the new initial condition also satisfies conditions of Theorem \ref{main theorem 1}.

In Figure \ref{fig:test12w1}, \ref{fig:test12w2} and \ref{fig:test12w3}, we plot the numerical solutions to (\ref{herding model}) for  each scenario upto $T=6$, respectively. In all these figures, we observe the herding phenomenon in the expected rate of returns and favorabilities.

\begin{figure}[h]
	\includegraphics[width=0.45\linewidth]{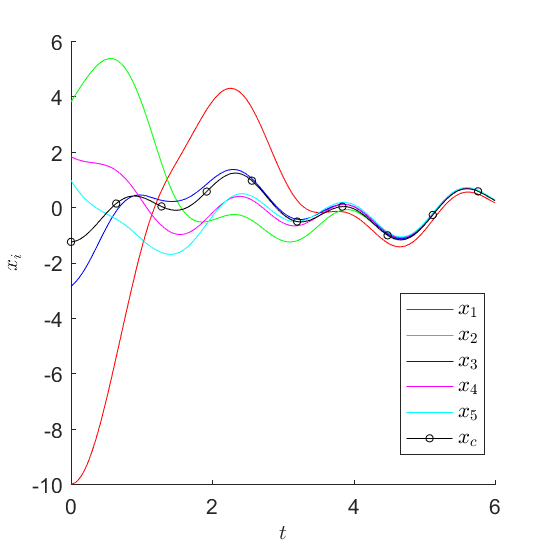}
	\includegraphics[width=0.45\linewidth]{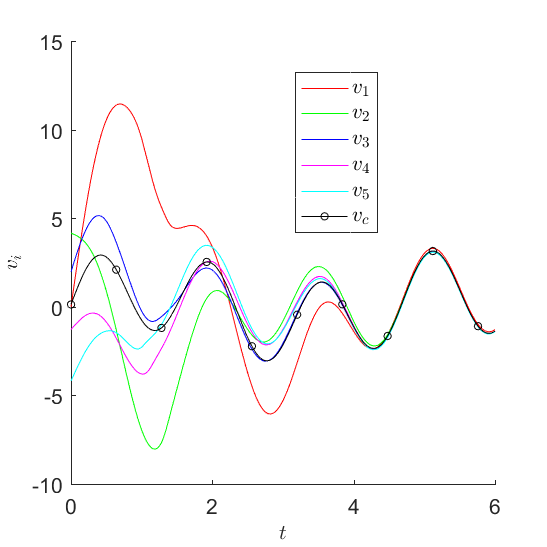}
	\caption{$w_1(x,t)=4\cos(4t)$}
	\label{fig:test12w1}
\end{figure}

\begin{figure}[!h]
	\includegraphics[width=0.45\linewidth]{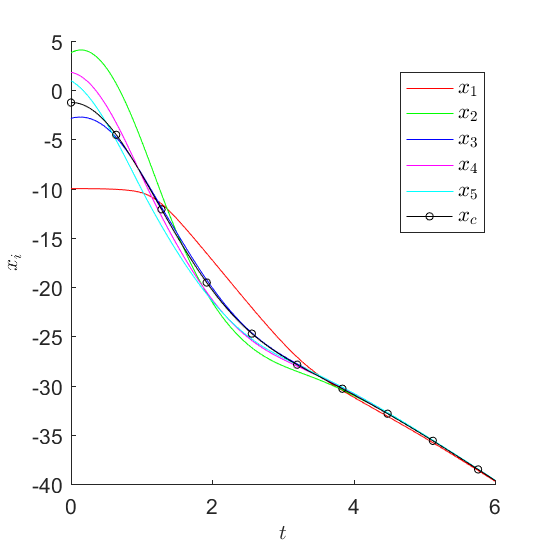}
	\includegraphics[width=0.45\linewidth]{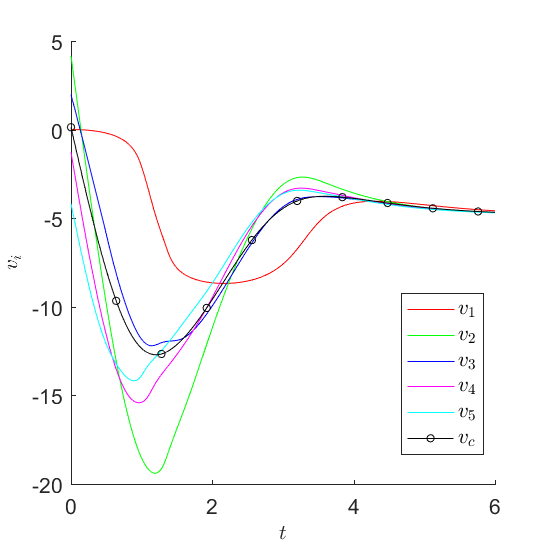}
	\caption{$w_2(x,t)=x_1(t)$}
	\label{fig:test12w2}
\end{figure}
\begin{figure}[!h]
	\includegraphics[width=0.45\linewidth]{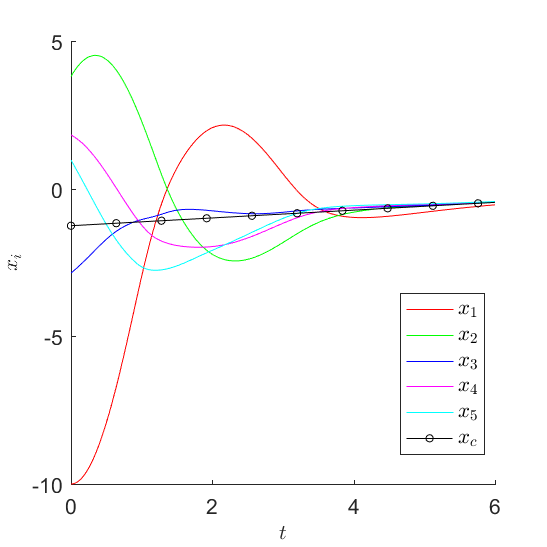}
	\includegraphics[width=0.45\linewidth]{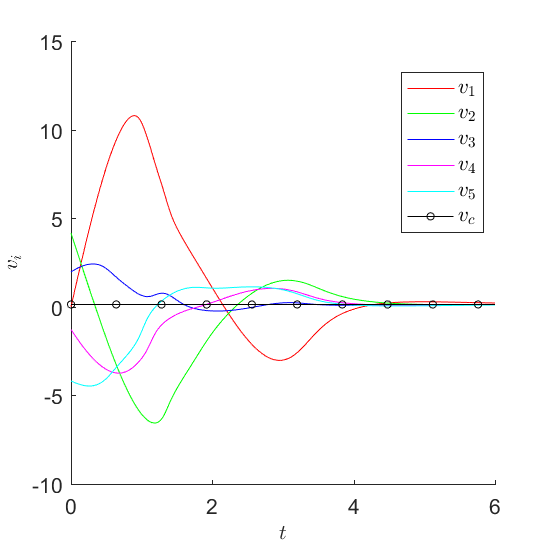}
	\caption{$w_3(x,t)=\frac{1}{N}\sum_{i=1}^{N}x_i(t)$}
	\label{fig:test12w3}
\end{figure}

\begin{figure}[!h]
	\centering
	\begin{subfigure}[b]{0.328\linewidth}
		\includegraphics[width=1\linewidth]{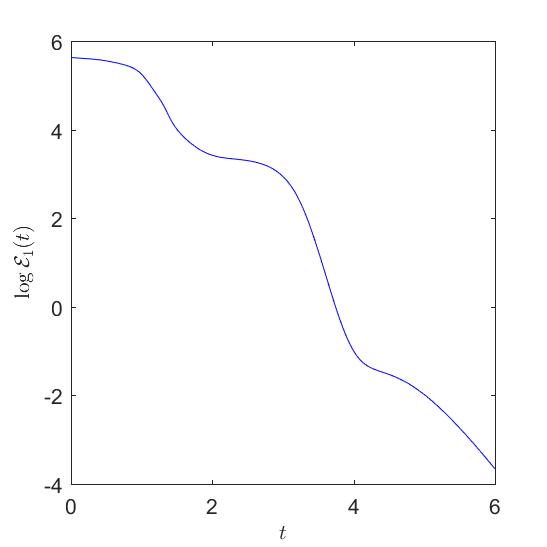}	
		\subcaption{Herding energy $\mathcal{E}_1(t)$}
		\label{fig:test12E1}
	\end{subfigure}	
	\begin{subfigure}[b]{0.328\linewidth}
		\includegraphics[width=1\linewidth]{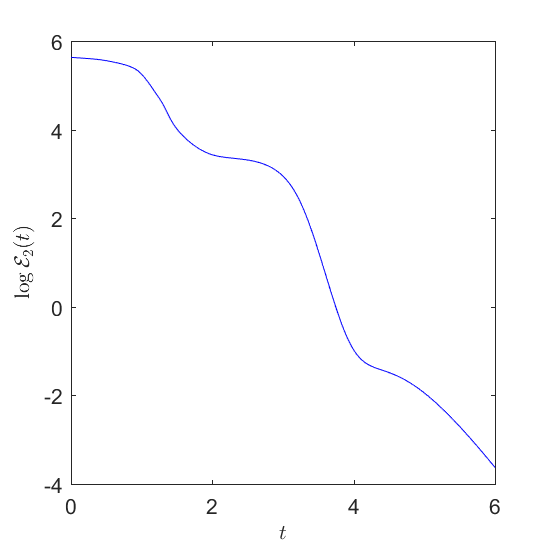}	
		\subcaption{Herding energy $\mathcal{E}_2(t)$}
		\label{fig:test12E2}	
	\end{subfigure}
	\begin{subfigure}[b]{0.328\linewidth}
		\includegraphics[width=1\linewidth]{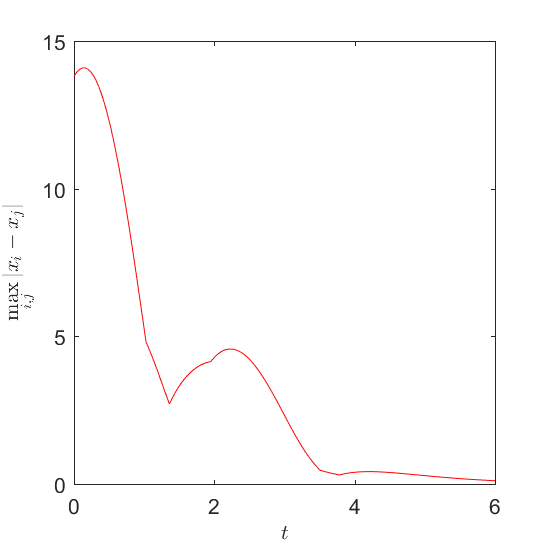}	
		\caption{ $\max_{i,j}|x_i-x_j|$}
		\label{fig:test12boundx}	
	\end{subfigure}
	\caption{Herding energies and bound}	
\end{figure}
The main difference compared to the Test 1-1 is observed in Figure \ref{fig:test12w2}, where we clearly see the influence of negative evaluation on the asset by player $x_1$ in that, in contrast to Figure \ref{fig:test11w2}, the negative assessment of the influential market player on the rate of return of the asset $(x_1(0)<0)$  leads to the ever-decreasing expected rate of returns, and emergence of negative value of $v_c$ even if we initially have $v_c(0)>0$.

In Figure \ref{fig:test12E1}, Figure \ref{fig:test12E2} and Figure \ref{fig:test12boundx}, we can see that each herding energy decays monotonically and the uniform bound $2\max_{i,j}|x_i(0)-x_j(0)|$ can not be attained for all $t \geq 0$, as in the test 1-1.


\subsubsection{Numerical test 1-3}{\bf (Removal of restrictions on parameters)} \label{NT1-3}
In this test, we provide a numerical example which shows that   the herding behavior still occurs
even when the restrictions on the paramenters and initial configurations imposed on Theorem \ref{main theorem 1} are not satisfied, as was guranteed by Theorem \ref{main theorem 2}.


 For this, we choose $\gamma=1$, $\lambda_x=2$, $\lambda_v=1$ and $\lambda_w=0.5$. Since there is no restriction on  initial data in Theorem \ref{main theorem 2}, we choose them randomly in $[-15,15]\times[-15,15]$. In Figure \ref{fig:test13w1} - \ref{fig:test13w3}, we plot the numerical solutions to (\ref{herding model}) upto $T=15$. These figures show that the
 desired herding phenomena occurs.

We, however, observe that various important features of Theorem \ref{main theorem 1} do not hold.
Figure \ref{fig:test13E1} shows that $\mathcal{E}_1(t)$ does not decrease monotonically anymore, and even may exceed $\mathcal{E}_1(0)$. We observe that $\mathcal{E}_1(t)$ can take  negative values. In Figure \ref{fig:test13boundx}, we also see that
$\max_{i,j}|x_i(t)-x_j(t)|$ can exceed  $2\max_{i,j}|x_i(0)-x_j(0)|$ in this case,  which implies that the
proof of Theorem \ref{main theorem 1} in Section 4 breaks down.
Meanwhile, in Figure \ref{fig:test13E2}, $\mathcal{E}_2(t)$ decreases monotonically toward $0$ even with same initial data.

\begin{figure}[h]
	\includegraphics[width=0.45\linewidth]{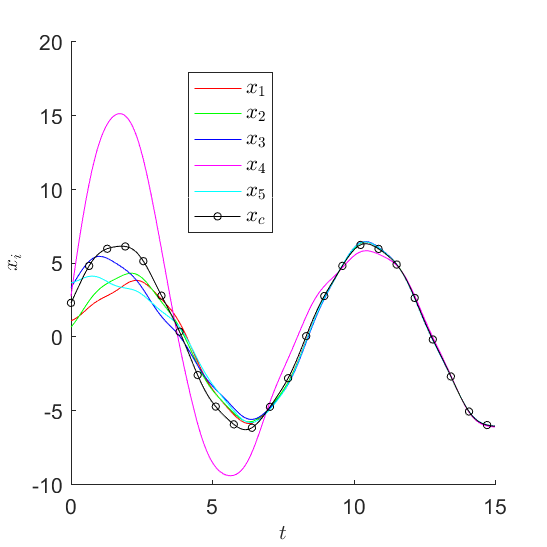}
	\includegraphics[width=0.45\linewidth]{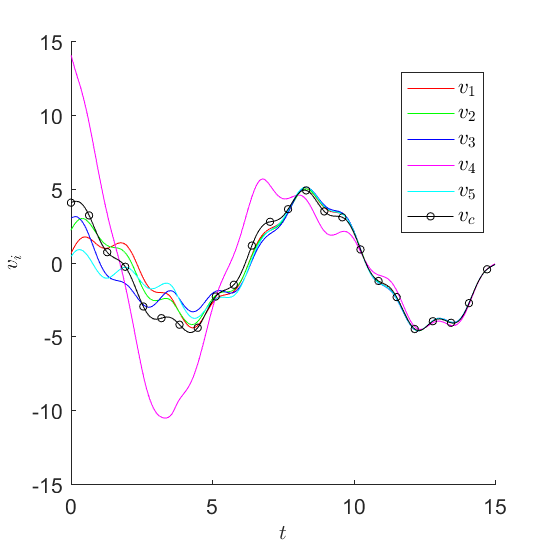}
	\caption{$w_1(x,t)=4\cos(4t)$}
	\label{fig:test13w1}
\end{figure}
\begin{figure}[!h]
	\includegraphics[width=0.45\linewidth]{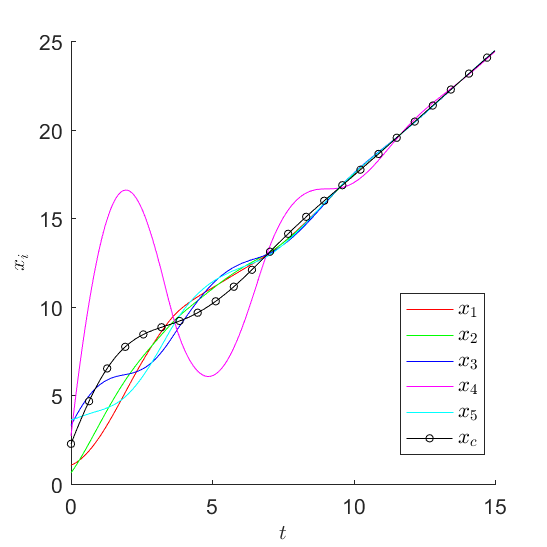}
	\includegraphics[width=0.45\linewidth]{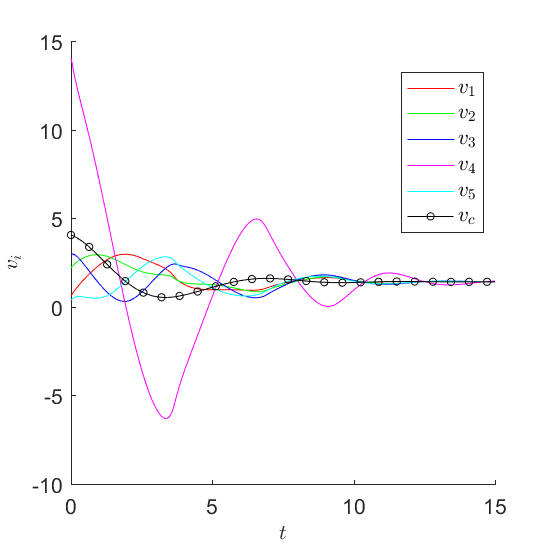}
	\caption{$w_2(x,t)=x_1(t)$}
	\label{fig:test13w2}
\end{figure}
\begin{figure}[!h]
	\includegraphics[width=0.45\linewidth]{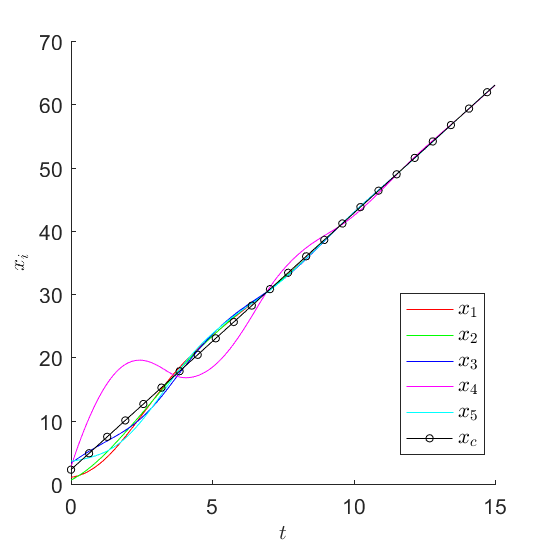}
	\includegraphics[width=0.45\linewidth]{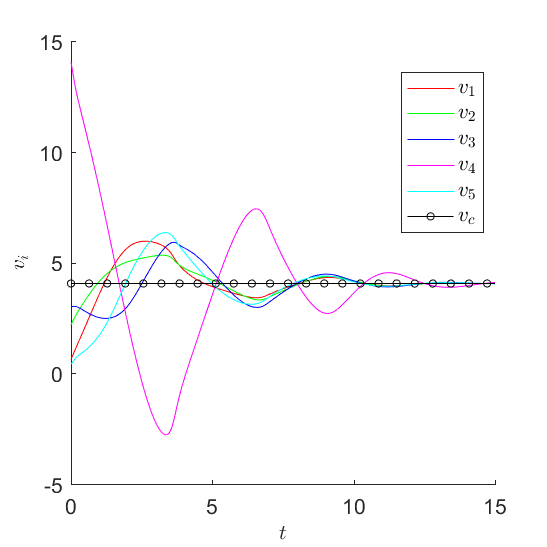}
	\caption{$w_3(x,t)=\frac{1}{N}\sum_{i=1}^{N}x_i(t)$}
	\label{fig:test13w3}
\end{figure}

\begin{figure}[!]
	\centering
	\begin{subfigure}[b]{0.328\linewidth}
		\includegraphics[width=1\linewidth]{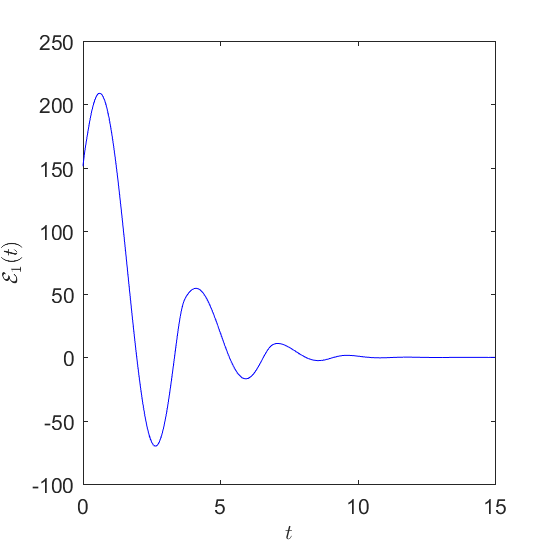}	
		\subcaption{Herding energy $\mathcal{E}_1(t)$}
		\label{fig:test13E1}
	\end{subfigure}	
	\begin{subfigure}[b]{0.328\linewidth}
		\includegraphics[width=1\linewidth]{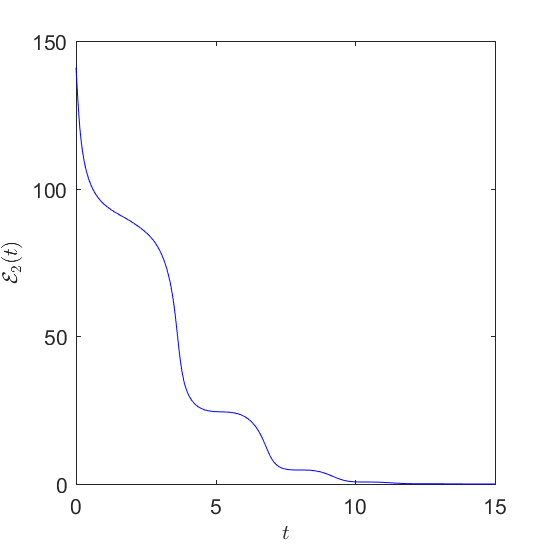}	
		\subcaption{Herding energy $\mathcal{E}_2(t)$}
		\label{fig:test13E2}	
	\end{subfigure}
	\begin{subfigure}[b]{0.328\linewidth}
		\includegraphics[width=1\linewidth]{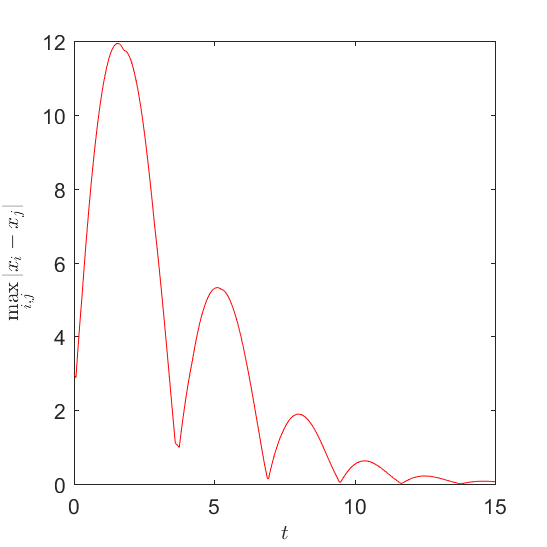}	
		\caption{ $\max_{i,j}|x_i-x_j|$ }
		\label{fig:test13boundx}	
	\end{subfigure}
\caption{Herding energies and bound}
\end{figure}

\subsection{Numerical test 2:} In this test, we consider the trajectory of numerical solutions to
the centralized herding model (\ref{centralized herding model}) to visualize the dynamics of solutions.
We choose $\gamma=2$, $\lambda_x=1$, $\lambda_v=1$ and $\lambda_w=1$, which do not fit into the condition of Theorem \ref{main theorem 1}. The herding phenomena is still guaranteed by Theorem \ref{main theorem 2}. For the clarity of simulation, we take the number of assets and the number of players as $M=2,~N=4$, and pick initial data from $[-5,5] \times [-5,5]$ such that $$x_c(0)=v_c(0)=0.$$

Figure \ref{fig:test20to5xv} -  \ref{fig:test220to25xv}  present the trajectory of each $x_i$ and $v_i$ upto final time $T=25$. In each figure, `o' and `x' stand for the endpoint and the starting point of each trajectory respectively.
These figures show how the configurations of $x_i$ and $v_i$ evolve as time passes.
Even though their movement seems to be very tangle, all fluctuations are getting smaller in each step, eventually leading to herding phenomena.

\begin{figure}[]
	\includegraphics[width=0.45\linewidth]{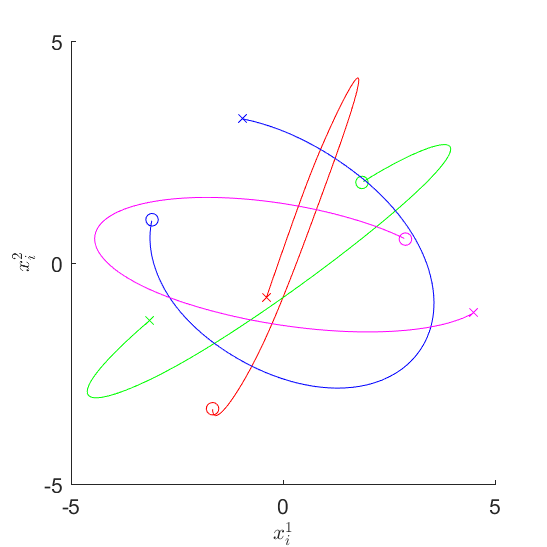}
	\includegraphics[width=0.45\linewidth]{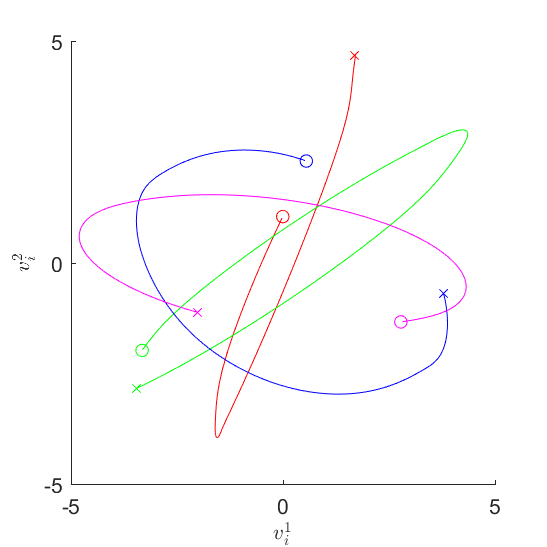}
	\caption{$t=0$ to $5$}
	\label{fig:test20to5xv}
	\includegraphics[width=0.45\linewidth]{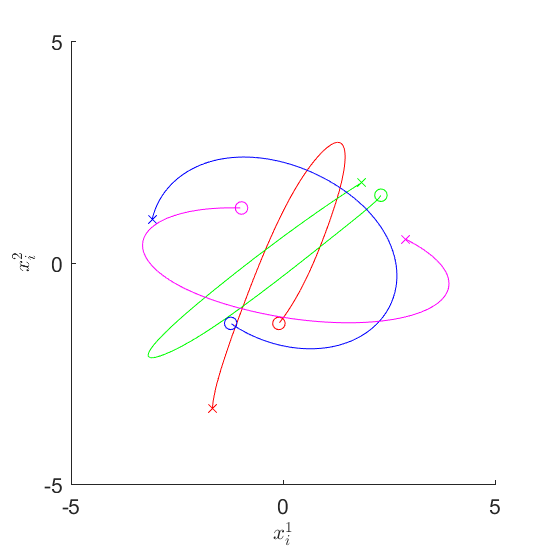}
	\includegraphics[width=0.45\linewidth]{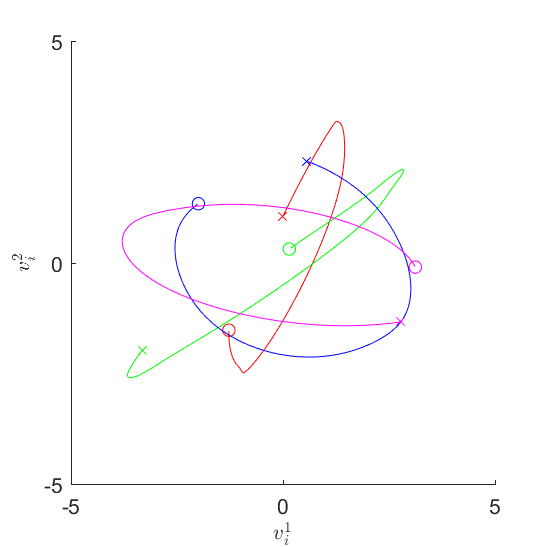}
	\caption{$t=5$ to $10$}
	\label{fig:test25to10xv}
	\includegraphics[width=0.45\linewidth]{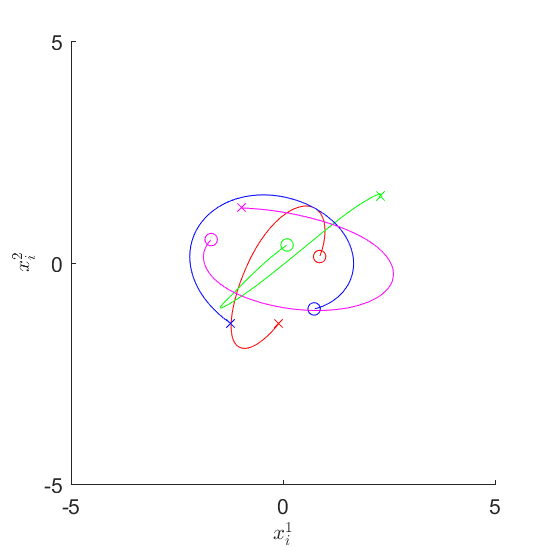}
	\includegraphics[width=0.45\linewidth]{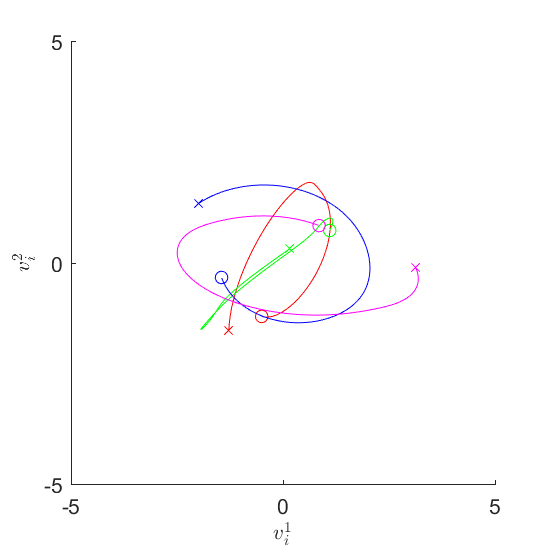}
	\caption{$t=10$ to $15$}
	\label{fig:test210to15xv}	
\end{figure}

\begin{figure}[]
	\centering
	\includegraphics[width=0.45\linewidth]{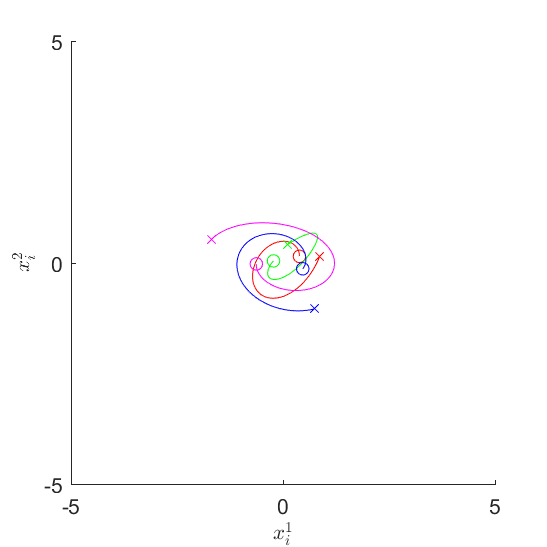}
	\includegraphics[width=0.45\linewidth]{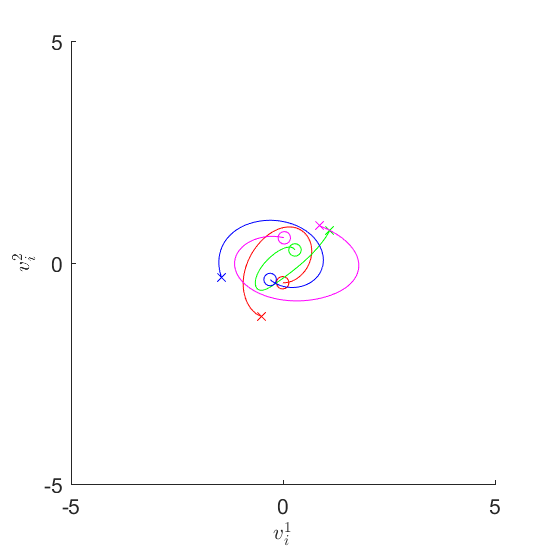}
	\caption{$t=15$ to $20$}
	\label{fig:test215to20xv}
	\includegraphics[width=0.45\linewidth]{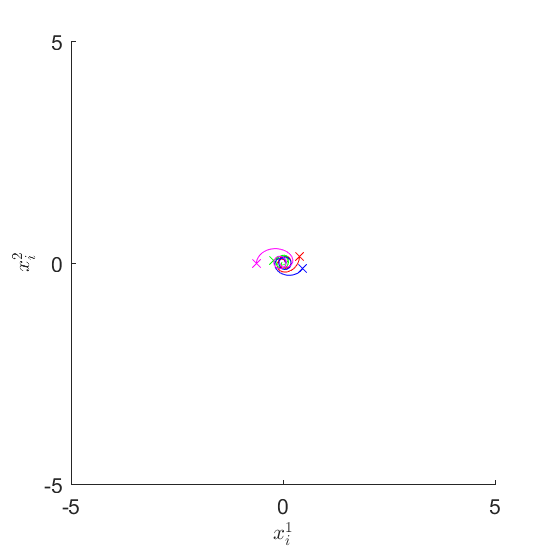}	
	\includegraphics[width=0.45\linewidth]{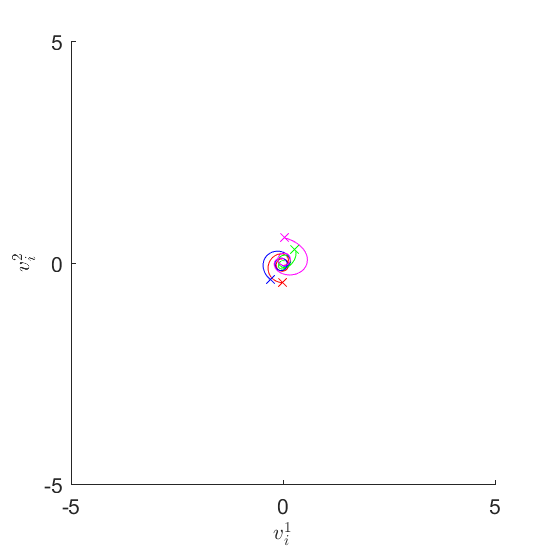}
	\caption{$t=20$ to $25$}
	\label{fig:test220to25xv}	
\end{figure}

\subsection{Numerical test 3:} In this test, we simulate the behavior of large number of players dealing with two assets $(M=2, N=500)$. We set $\gamma=1.5$, $\lambda_x=1$, $\lambda_v=1$ and $\lambda_w=1$ and choose initial data from $[-5,5]\times[-5,5]$ to be $$x_c(0)=v_c(0)=0.$$  We take final time as $T=20$. As in test 2, we consider the numerical solution to the centralized model (\ref{centralized herding model}).

In Figure \ref{fig:test3initial} - \ref{fig:test3sec20}, we plot histograms of $x_i$ and $v_i$ at $t=0, 5, 10, 15$ and $20$, respectively. In each histogram, each boundary cell count the number of outliers in its designated region. For example, the top left cell counts the number of market players in $(-\infty,-5) \times (5,\infty)$. We see that, except for Figure \ref{fig:test3initial}, there is no outlier at $t=5,10,15$ and $20$. As time flows, $x_i$ and $v_i$ are
concentrated to centers $(0,0)$ and $(0,0)$, respectively. In Figure \ref{fig:test3sec20}, we  observe that all market players have similar expected rate of returns and favorabilies enough to be contained in one pixel.

\begin{figure}[]
	\includegraphics[width=0.45\linewidth]{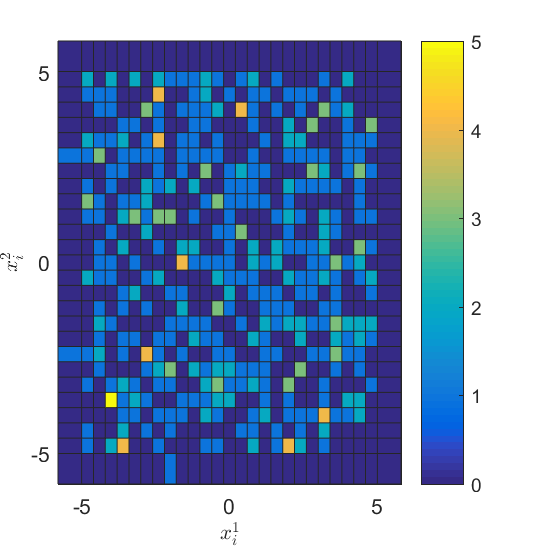}
	\includegraphics[width=0.45\linewidth]{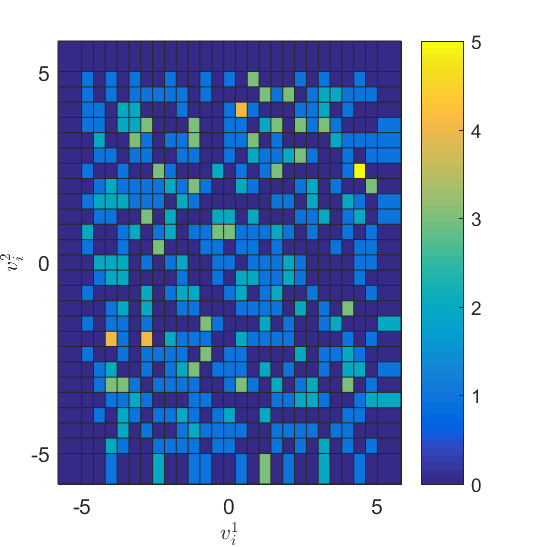}
	\caption{$t=0$}
	\label{fig:test3initial}
	\includegraphics[width=0.45\linewidth]{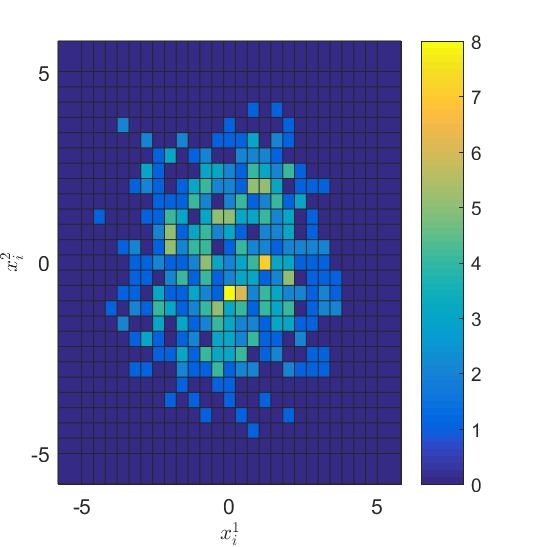}
	\includegraphics[width=0.45\linewidth]{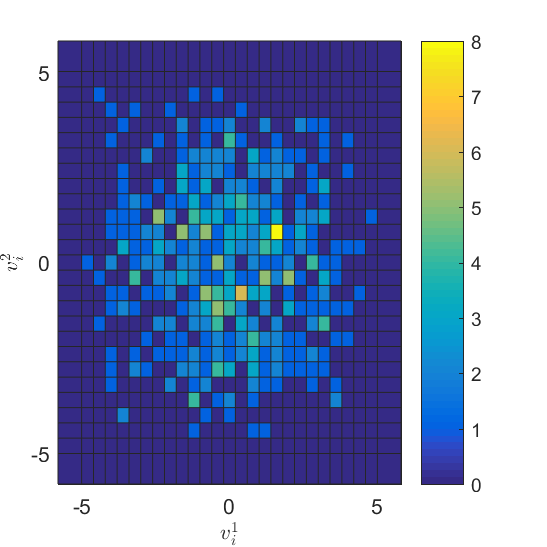}
	\caption{$t=5$}
	\label{fig:test3sec5}
	\includegraphics[width=0.45\linewidth]{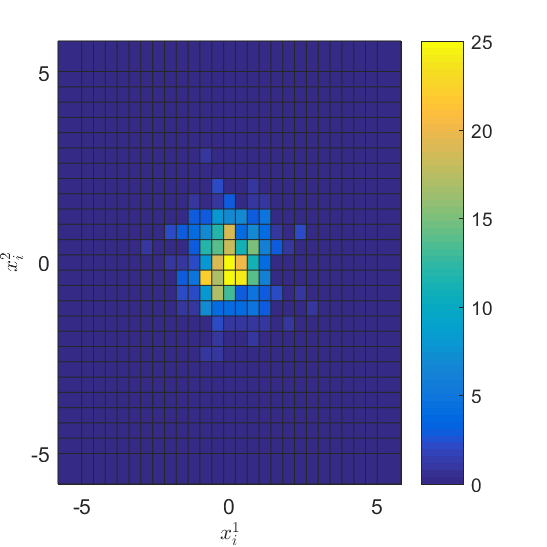}
	\includegraphics[width=0.45\linewidth]{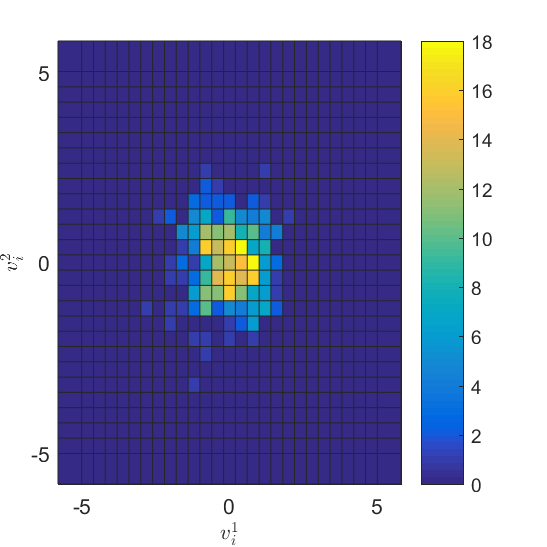}
	\caption{$t=10$}
	\label{fig:test3sec10}
\end{figure}

\begin{figure}[]
	\includegraphics[width=0.45\linewidth]{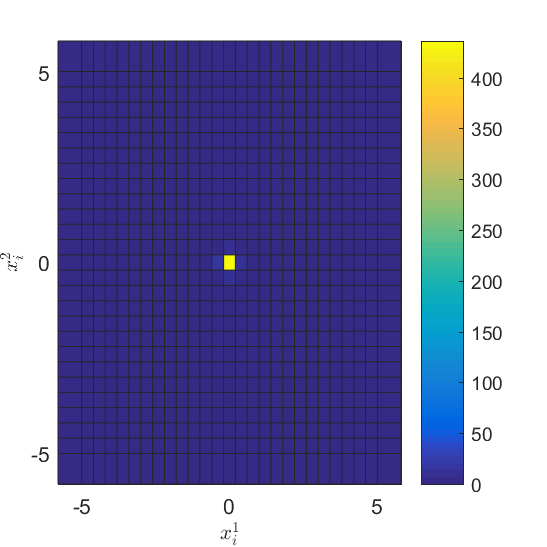}
	\includegraphics[width=0.45\linewidth]{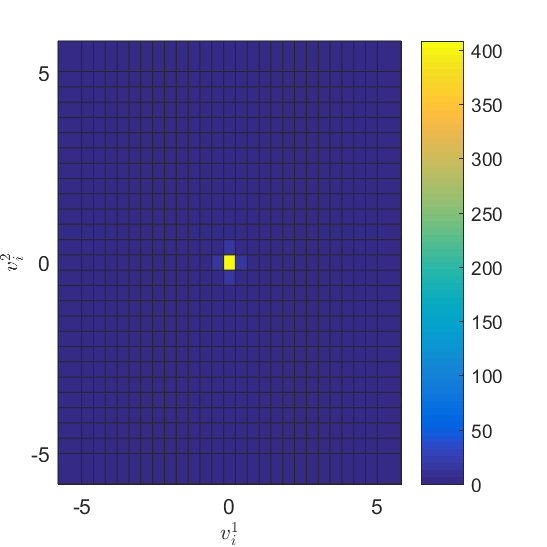}
	\caption{$t=15$}
	\label{fig:test3sec15}
	\includegraphics[width=0.45\linewidth]{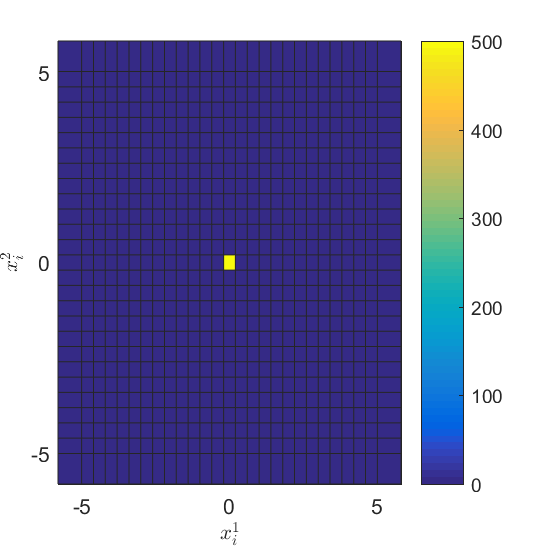}
	\includegraphics[width=0.45\linewidth]{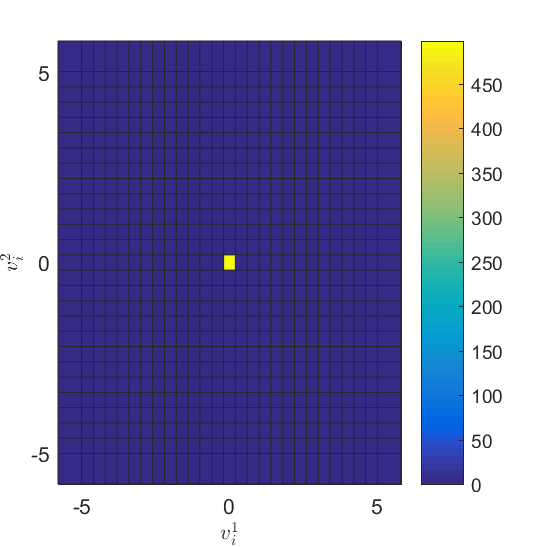}
	\caption{$t=20$}
	\label{fig:test3sec20}	
\end{figure}

%
%
%
%
%
%
%
\section{Conclusion and future work}
In this paper, we reinterpreted the rate of return and the favorability  as phase point on phase space, and derive an agent based particle model for herding phenomena. We then use this model to
prove the herding behavior induced by various dynamic market signals.  We also provided various numerical simulations to verify and visualize these results.
This work can be developed or extended in several direction.
First, particle model with noise is definitely on the first place in the next-to-do list. We restricted ourselves to noiseless situation for clarity in this paper.
Secondly, upon  incorporating collision avoidance mechanism, our model can be naturally modified to model herding or swarming behavior of particles, animals, bacteria, individuals, unmanned vehicles and so on.
Thirdly, we didn't consider the case where the players can enter or leave the market based on the asset prices. This also seems to be interesting possible future work.
Finally, kinetic and hydrodynamic limit of this model and verifying herding phenomena at those levels will be be treated in future works.

\textbf{Acknowledgement}. The authors thank Jane Yoo for serious discussion and helpful comments.

\end{document}